\theoremstyle{thmstyleone}%
\newtheorem{theorem}{Theorem}
\newtheorem{proposition}[theorem]{Proposition}%
\theoremstyle{thmstyletwo}%
\theoremstyle{thmstylethree}%
\begin{document}

\journaltitle{PNAS Nexus}
\DOI{10.1093/pnasnexus/pgae103}
\copyrightyear{2024}
\pubyear{2024}
\access{Advance Access Publication Date: Day Month 2024}
\appnotes{Paper}

\firstpage{1}


\title[Scattering Spectra Models for Physics]{Scattering Spectra Models for Physics}

\author[a, $\ast$]{Sihao Cheng}
\author[b]{Rudy Morel} 
\author[c]{Erwan Allys}
\author[d]{Brice M\'enard}
\author[b,e,f]{St\'ephane Mallat}

\authormark{Cheng et al.}

\address[a]{\orgdiv{School of Natural Sciences}, \orgname{Institute for Advanced Study}, \orgaddress{\street{1 Einstein Dr, Princeton}, \postcode{08540}, \state{NJ}, \country{USA}}}
\address[b]{\orgdiv{Departement d'informatique de l'ENS}, \orgname{ENS, CNRS, PSL University}, \orgaddress{\street{Paris}, \postcode{75014}, \state{Paris}, \country{France}}}
\address[c]{\orgdiv{Laboratoire de Physique de l'Ecole normale sup\'erieure}, \orgname{ENS, Universit\'e PSL, CNRS, Sorbonne Universit\'e, Universit\'e Paris Cit\'e}, \orgaddress{\street{Paris}, \postcode{75014}, \state{Paris}, \country{France}}}
\address[d]{\orgdiv{Department of Physics and Astronomy}, \orgname{Johns Hopkins University}, \orgaddress{\street{3400 N Charles Street, Baltimore}, \postcode{21218}, \state{MD}, \country{USA}}}
\address[e]{\orgdiv{}, \orgname{Coll\`ege de France}, \orgaddress{\postcode{75231}, \state{Paris}, \country{France}}}
\address[f]{\orgdiv{Center for Computational Mathematics}, \orgname{Flatiron Institute}, \orgaddress{\street{New York}, \postcode{}, \state{NY}, \country{USA}}}

\corresp[$\ast$]{To whom correspondence should be addressed: \href{email:scheng@ias.edu}{scheng@ias.edu}}

\received{Date}{0}{2023}
\accepted{Date}{0}{2024}


\abstract{Physicists routinely need probabilistic models for a number of tasks such as parameter inference or the generation of new realizations of a field. Establishing such models for highly non-Gaussian fields is a challenge, especially when the number of samples is limited. 
In this paper, we introduce scattering spectra models for stationary fields and we show that they provide accurate and robust statistical descriptions of a wide range of fields encountered in physics.
These models are based on covariances of scattering coefficients, i.e. wavelet decomposition of a field coupled with a pointwise modulus.
After introducing useful dimension reductions taking advantage of the regularity of a field under rotation and scaling, we validate these models on various multiscale physical fields and demonstrate that they reproduce standard statistics, including spatial moments up to $4^{\text{th}}$ order. 
The scattering spectra provide us with a low-dimensional structured representation that captures key properties encountered in a wide range of physical fields. These generic models can be used for data exploration, classification, parameter inference, symmetry detection, and component separation.}
\keywords{reduced models, Gibbs energy, wavelets} 

\boxedtext{
Physicists need to characterize fields with a variety of structures, but building probabilistic models beyond the simple Gaussian model is often challenging, especially when the number of data samples is limited. We introduce scattering spectra models that make use of symmetry and regularity properties of physical fields and show that they can provide accurate and compact statistical descriptions for a wide range of fields. Providing both summary statistics and generative models, this representation can be used for data exploration, classification, parameter inference, symmetry detection, and component separation in analyzing the ever-growing datasets in physics and beyond.
}

\maketitle

\newcommand{\Cov} {{\rm Cov}}
\newcommand{\R} {{\mathbb R}}
\newcommand{\E} {{\mathbb E}}
\newcommand{\Z} {{\mathbb Z}}
\newcommand{\la} {{\lambda}}
\newcommand{\ga} {{\gamma}}
\newcommand{\sple}[1]{\bar{#1}}
\newcommand{\Av}[1] {\underset {#1} {\rm Ave}\,}
\newcommand{\Avk} {\underset {k} {\rm Ave}\,}
\newcommand{\Avu} {\underset {u} {\rm Ave}\,}
\newcommand{\Avr} {\underset {r \in G} {\rm Ave}\,}
\newcommand{\Avj} {\underset {j} {\rm Ave}\,}
\newcommand{\Avi} {\underset {i} {\rm Ave}\,}
\newcommand{\Aviu} {\underset {i,u} {\rm Ave}\,}
\newcommand{\half} {{ 1/ 2}}
\newcommand{\dd} {{\mathrm d}}



\section{Introduction}

An outstanding problem in statistics is to estimate the probability distribution $p(x)$ of high dimensional data $x$ from few or even one observed sample. In physics, establishing probabilistic models of stochastic fields is also ubiquitous, from the study of condensed matter to the Universe itself. Indeed, even if physical systems can generally be described by a set of differential equations, it is usually not possible to fully characterize their solutions. Complex physical fields, described here as non-Gaussian random processes $x$, may indeed include intermittent phenomena as well as coherent geometric structures such as vortices or filaments. Having realistic probabilistic models of such fields however allows for considerable applications, for instance to accurately characterize and compare non-linear processes, or to separate different sources and solve inverse problems. Unfortunately, no generic probabilistic model is available to describe complex physical fields such as turbulence or cosmological observations. This paper aims at providing such models for stationary fields, 
which can be estimated from one observed sample only.

At thermal equilibrium, physical systems are usually characterized by the Gibbs probability distribution, also called Boltzmann distribution, that depends on the energy of the systems~\citep{landau2013statistical}. For non-equilibrium systems, at a fixed time one may still specify the probability distribution of the field with a Gibbs energy, which is an effective Hamiltonian providing a compact representation of its statistics. Gibbs energy models can be defined as maximum entropy models conditioned by appropriate moments~\citep{jaynes_1957}. The main difficulty is to define and estimate the moments which specify these Gibbs energies.

For stationary fields, whose probability distributions are invariant to translation, moments are usually computed with a Fourier transform, which diagonalizes the covariance matrix of the field. The resulting covariance eigenvalues are the Fourier power spectrum. However, capturing non-Gaussian properties requires to go beyond second-order moments of the field. Third and fourth-order Fourier moments are called bispectrum and trispectrum. 
For a cubic $d$-dimensional stationary field of length $L$, the number of coefficients in the raw power spectrum, bispectrum and trispectrum are $O(L^d)$, $O(L^{2d})$ and $O(L^{3d})$ respectively. High-order moment estimators have high variance and are not robust, especially for non-Gaussian fields, because of potentially rare outliers which are amplified. It is thus very difficult to accurately estimate these high-order Fourier spectra from a few samples. Accurate estimations require considerably reducing the number of moments and eliminating the amplification effect of high-order moments.

Local conservation laws for mass, energy, momentum, charge, etc. result in continuity equations or transport equations. The resulting probability distributions of the underlying processes thus are typically regular to deformations that approximate the local transport. These properties have motivated many researchers to make use of a wavelet transform as opposed to a Fourier transform, which provides localized descriptors. Most statistical studies have concentrated on second-order and marginal wavelet moments \citep[e.g.,][]{Bougeret_1995, Vielva_2004, Podesta_2009} which fail to capture important non-Gaussian properties of a field. Other studies~\citep[][]{ha2021adaptive} use wavelet operator for interpretation with application to cosmological parameter inference, but rely on a trained neural network model. 

In recent years, new representations have been constructed by applying point-wise non-linear operators on the wavelet transforms to recover their non-Gaussian information. The scattering transform, for instance, is a representation that is built by cascading wavelet transforms and non-linear modulus~\citep{mallatscat, bruna2013invariant}. This representation has been used in astrophysics and cosmology to study the interstellar medium~\citep{allys2019rwst,saydjari2021classification}, weak-lensing fields \citep{Cheng_2020,Cheng_2021}, galaxy surveys~\citep{valogiannis2022towards, Valogiannis_2022, Valogiannis_2023}, and radio observations~\citep{Greig_2022} (readers with physics background may find this review \cite{cheng2021quantify} useful). Other representations, which are built from covariances of phase harmonics of wavelet transforms~\citep{mallat2020phase,zhang2021maximum}, have also been used to model different astrophysical processes~\citep{Allys_2020, jeffrey2022single, regaldo2023generative}. Such models, which can be built from a single image, have in turn enabled the development of new component separation methods~\citep{regaldo2021new, delouis2022non}, which can be directly applied to observational data without any particular prior model of the components of a mixture~\citep{auclair2023separation,siahkoohi2023unearthing}.

These models however suffer from a number of limitations: they are not very good at reproducing vortices or long thin filaments, and they require an important number of coefficients to capture dependencies between distant scales, as well as angular dependencies. Building on those previous works, \cite{rudy} introduced a reduced scattering spectra representations for time series by leveraging scale invariance. In this paper, we present the scattering spectra for datasets with dimensions more than one, which is a low-dimensional representation able to efficiently describe a wide range of non-Gaussian processes encountered in physics. In particular, we show that it is possible to take into account the intrinsic regularity of physical fields and dramatically reduce the size of such representations. 
The first part of the paper presents maximum entropy models and the scattering spectra statistics, as well as their dimensional reduction. The second part of the paper presents a quantitative validation of these models on various two-dimensional multiscale physical fields and discuss their limitations. 

\begin{figure*}
    \centering
    \includegraphics[width=\textwidth]{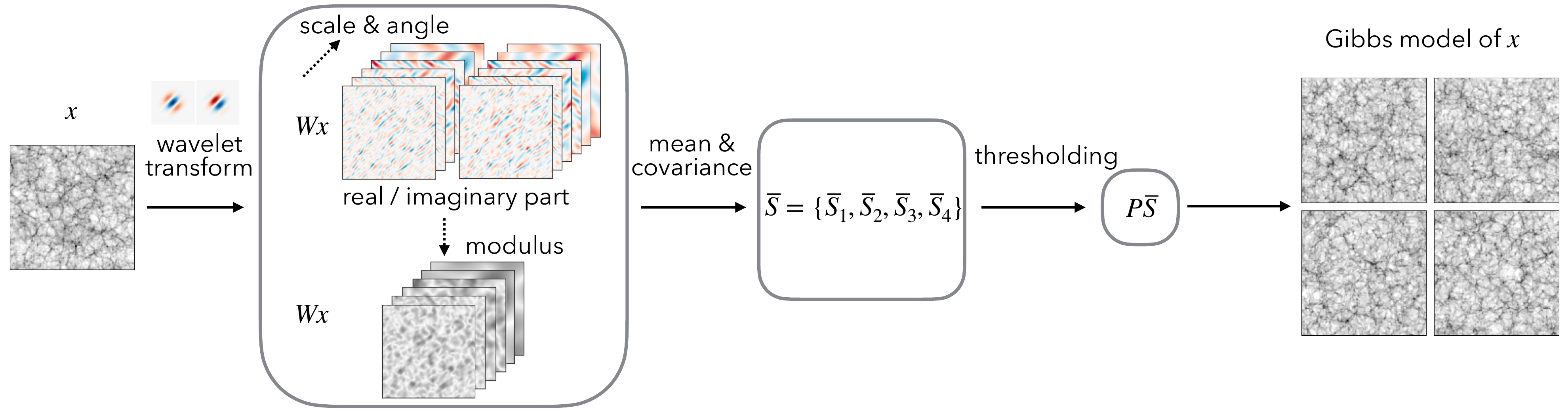}
    \caption{Steps to build a feasible model for a random field $x$ from only one or a few realizations. We first build a low-dimension representation $\Phi(x)$ of the random field, which specifies a maximum entropy model. The representation $\Phi(x)$ is obtained by conducting the wavelet transform $Wx$ and its modulus $|Wx|$, and then computing the means and covariance of all wavelet channels $(Wx\,,\,|Wx|)$. Such a covariance matrix is further binned and sampled using wavelets to reduce its dimensionality, which is called the scattering spectra $\bar{S}(x)$. Finally, These scattering spectra are renormalized and reduced in dimension by thresholding its Fourier coefficients along rotation and scale parameters $\Phi(x)=P\bar{S}$, making use of the regularity properties of the field. For many physical fields, this representation can be as small as only around $~\sim 10^2$ coefficients for a 256$\times$256 field.}
    \label{fig:model-concept}
\end{figure*}

{\bf Notations:}  $v^*$ is the complex conjugate of a scalar $v$.  ${\rm Ave}_i$ averages values indexed by $i$ in a finite set. $\hat x[k]$ is the Fourier transform of $x[u]$, whether $u$ is a continuous variable in $\R^d$ or belongs to finite periodic lattice. $\E\{ \Phi(x)\}$ is the expectation of $\Phi(x)$ according to the probability distribution $p(x)$ of a vector $x$. $\log$ stands for base 2 logarithm.


\section{Methods}
\subsection{Gibbs Energy of Stationary Fields}
\label{expon-sec}

We first review the properties of Gibbs energies resulting from maximum entropy models conditioned by moment values \citep{German_1984, zhu1997minimax, zhu1998filters}. We write $x[u]$ a field where the site index $u$ belongs to a cubic $d$-dimensional lattice of size $L$. It results that $x \in \R^{L^d}$.

Assume that $x \in \R^{L^d}$ has a probability density $p(x)$ and consider Gibbs energy models linearly parameterized by a vector $\theta = \{\theta_m \}_{m \leq M}$ over a potential vector $\Phi(x) = \{ \Phi_m (x) \}_{m \leq M}$ of dimension $M$ 
\begin{align}
U_\theta (x) = \langle \theta , \Phi(x) \rangle = \sum_{m=1}^M \theta_m^*\, \Phi_m(x).
\end{align}
They define exponential probability models
\begin{equation}
\label{exponmod}
    p_\theta (x) = Z^{-1}_\theta \, e^{- \langle \theta , \Phi(x) \rangle}.
\end{equation}
The model class is thus defined by the potential vector $\Phi(x)$, which needs to be chosen appropriately. 

If it exists, the maximum entropy distribution conditioned by $\E\{\Phi(x)\}$ is a $p_{\theta_0}$ which belongs to this model class. It has a maximum entropy $H(p_{\theta_0}) = - \int p_{\theta_0} (x)\, \log p_{\theta_0} (x)\,dx$ under the expected value condition
\begin{equation}
\label{moment-cond2}
    \int \Phi(x)\, p_{\theta_0} (x)\, dx = \E\{\Phi(x)\} .
\end{equation}
In statistical physics, $p_{\theta_0}$ is a macrocanonical model defined by a vector $\E\{\Phi(x)\}$ of observables. One can verify that $\theta_0$ also minimizes the Kullback-Liebler divergence within the class
\begin{equation}
    \label{model-error}
D(p \| p_{\theta_0}) = \int p(x) \log \frac{p(x)} {p_{\theta_0} (x)}\, dx = H(p_{\theta_0}) - H(p) .
\end{equation}

The main topic of the paper is to specify $\Phi(x)$ in order to define accurate maximum entropy models for large classes of physical fields, which can be estimated from a small number $n$ of samples $\sple{x}_i$. In this section, we suppose that $n=1$. Reducing the model error given by \eqref{model-error} amounts to defining $\Phi$ which reduces the excess entropy of the model. This can be done by enriching $\Phi(x)$ and building very high-dimensional models. However, we must also take into account the empirical estimation error of $\E \{ \Phi(x)\}$ by $\Phi(\sple{x}_1)$, measured by $\E\{\| \Phi(x) -  \E\{\Phi(x)\}\|^2\}$. 

In this paper, macrocanonical models are approximated by microcanonical models, which have a maximum entropy over a microcanonical set of width $\epsilon > 0$
\begin{equation}
    \label{micro-set}
    \Omega_\epsilon = \{ x \in \R^{L^d}~:~\|\Phi(x) -  \Phi(\sple{x}_1) \|^2 \leq \epsilon \} .
\end{equation}
Supplementary material A reviews a sampling algorithm for such model. It also explains how to extend the definition of $\Omega_\epsilon$ for $n > 1$ samples $\sple{x}_i$ by replacing $\Phi(\sple{x}_1)$ by ${\rm Ave}_i \Phi(\sple{x}_i)$.
If $\Phi(x)$ concentrates around $\E\{\Phi(x)\}$ then the microcanonical model converges to the macrocanonical model when the system length $L$ goes to $\infty$ and $\epsilon$ goes to $0$. The concentration of $\Phi(x)$ generally imposes that its dimension $M$ is small relatively to the dimension $L^d$ of $x$. The choice of $\Phi(x)$ must thus incorporate a trade-off between the model error \eqref{model-error} and the distance between micro and macrocanonical distributions.

\subsection{Fourier Polyspectra} 

Gaussian random fields are maximum entropy models conditioned on first and second-order moments. The potential vector $\Phi(x)$ is then an empirical estimator of first and second-order moments of $x$. For stationary fields, there is only one first-order moment $\E\{x[u]\}$ which can be estimated with an empirical average\footnote{This single moment can be directly constrained, and we do not discuss it in the following.} over $u$: $\text{Ave}_u x[u]$. Similarly, the covariance matrix $\E\{x[u] x[u']\}$ only depends on $u-u'$, so only the diagonal coefficients in Fourier space are informative, which are called the power spectrum,
\begin{equation}
\E\{ \hat x [k]\,\hat x[k']^* \} ~~\mbox{with}~~ k = k'.
\label{eq:power-spectrum}
\end{equation}
The off-diagonal elements vanish because of phase cancellation under all possible translations, which means the second-order moments treat Fourier coefficients independently, and cannot describe relations or dependence between them. The diagonal elements, which can also be written as $|\hat x [k]|^2$, can be estimated from a single sample $x$ by averaging $|\hat x[k]|^2$ over frequency bins that are large enough to reduce the estimator variance. A uniform binning and sampling along frequencies results in power spectrum estimators with $O(L^d)$ elements, so the Gaussian model is compact and feasible.

However, the Gaussian random field model has limited power to describe complex structures. The majority of fields encountered in scientific research are not Gaussian. Non-Gaussianity usually means dependence between Fourier coefficients at different frequencies.
The traditional way goes to higher orders moments of $\hat x$, the polyspectra \citep{brillinger1965introduction}, where phase cancellation implies that for stationary fields, only the following moments are informative,
\begin{equation}
\E\{\hat x[k_1]\,...\hat x[k_n] \} ~~\mbox{with}~~k_1 + ... + k_n = 0 ,
\label{eq:polyspectra}
\end{equation}
while other moments are zero. These polyspectra at order $n>2$ capture dependence between $n-1$ independent frequencies. As the leading term, the Fourier bispectrum specifies the non-zero third-order moments and has $O(L^{2d})$ coefficients. However, bispectrum is usually not sufficient to characterize non-Gaussian fields. For example, it vanishes if the field distribution is symmetric $p(x) = p(-x).$ One must then estimate fourth-order Fourier moments, the trispectrum, which has $O(L^{3d})$ coefficients. 

There are two main problems for the polyspectra coefficients to become proper potential functions $\Phi(x)$ in the maximum entropy models. First, the number of coefficients increases sharply with the order. Second, high-order moments are not robust and difficult to estimate from a few realizations \citep{Huber_1981}. For random fields with a heavy tail distribution, which is ubiquitous in complex systems \citep{Bak_1987, Bouchaud_1990, Coles_1991, Kello_2010, Sornette_2017}, higher order moments may not even exist. Those two problems are common for high-order moments and have been demonstrated in real-world applications \citep{Dudok_2004, Lombardo_2014}. In the following two sections, we introduce modifications to this approach to solve those problems.

\subsection{Wavelet Polyspectra}
\label{reduscatcov}

Many physical fields exhibit multiscale structures induced by non-linear dynamics, which implies regularity of $p(x)$ in frequency. The wavelet transform groups Fourier frequencies by wide logarithmic bands, providing a natural way to compress the Fourier polyspectra. The compression not only reduces the model size but also improves estimator convergence. We use the wavelet transform to compute a compressed power spectrum estimate, as well as a reduced set of $O(\log^2 L)$ third and $O(\log^3 L)$ fourth order wavelet moments, allowing for efficient estimation of the polyspectra.

\subsubsection{Wavelet Transform}
A wavelet is a localized wave-form $\psi[u]$ for $u \in \R^d$ which has a zero average $\int_{\R^d} \psi[u]\,du = 0$. We shall define complex-valued wavelets $\psi[u] = g[u]\,e^{i \xi . u}$ where $g[u]$ is a real window whose Fourier transform $\hat g[k]$ is centered at $k=0$ so that $\hat \psi[k] = \hat g[k-\xi]$ is localized in the neighborhood of the frequency $\xi$. Fig. S1 shows $\psi$ and $\hat \psi$ for a $d=2$ dimensional Morlet wavelet described in supplementary material B. The wavelet transform is defined by rotating $\psi[u]$ with a rotation $r$ in $\R^d$ and by
dilating it with dyadic scales $2^j > 1$. It defines
\begin{equation}
\label{wave}
    \psi_\la [u] = 2^{-j d}\, \psi[2^{-j} r^{-1} u]~~\mbox{with}~~\lambda = 2^{-j}\, r \xi~.
\end{equation}
Its Fourier transform is $\hat \psi_\la [k] = \hat g[2^j r^{-1} (k - \xi)]$, which is centered at the frequency $\lambda$ and concentrated in a ball whose radius is proportional to $2^{-j}$.

To decompose a field $x[u]$ defined over a grid of width $L$, the wavelet is sampled on this grid. Wavelet coefficients are calculated as convolutions with periodic boundary conditions
\begin{equation}
\label{conv_prod}
    W x[u,\lambda] = x \star \psi_{\lambda} [u] = \sum_{u'} x[u']\, \psi_{\lambda}[u-u'] .
\end{equation}
It measures the variations of $x$ in a spatial neighborhood of $u$ of length proportional to $2^j$, and it depends upon the values of $\hat x$ in a frequency neighborhood of $k=\lambda$ of length proportional to $2^{-j}$. The scale $2^j$ is limited to $1 \leq j \leq J$, and for practical application to fields with a finite size $L$, the choice of $J$ is limited by $J < \log L$. Left part of Fig.~\ref{fig:model-concept} illustrates the wavelet transform of an image.

The rotation $r$ is chosen within a rotation group of cardinal $R$, where $R$ does not depend on $L$. Wavelet coefficients need to be calculated for $R/2$ rotations because $Wx[u,-\lambda] = Wx[u,\lambda]^*$ for real fields. In $d = 2$ dimensions, the $R$ rotations have an angle $2 \pi \ell / R$, and we set $R = 8$ in all our numerical applications, which boils down to 4 different wavelet orientations. The total number of wavelet frequencies $\lambda$ is $R J = O(\log L)$\footnote{Here we assume the choice of $R$ is independent of field dimension $d$. Another possible choice is to require a constant ratio between the radial and tangential sizes of the $d$-dimension oriented wavelets. Then, $R$ is  proportional to the ratio between the surface area of a $d$--1-sphere and the volume of a $d$--1-ball, proportionally to $\Gamma(n/2+1/2)/\Gamma(n/2)$. It results in an approximate scaling of $R J = O(d\log L)$ when $d$ is small and $O(\sqrt{d}\log L)$ when $d$ is large.} as opposed to $L^d$ Fourier frequencies.

A wavelet transform is also stable and invertible if $\psi$ satisfies a Littlewood-Paley condition, which requires an additional convolution with a low-pass {\it scaling} function $\psi_0$ centered at the frequency $\lambda = 0$. The specifications are detailed in supplementary material B. 

\subsubsection{Wavelet Power Spectrum} 

Given scaling regularity, one can compress the $O(L^d)$ power spectrum coefficients into $RJ=O(\log L)$ coefficients using a logarithmic binning defined by wavelets. This is obtained by averaging the power spectrum with weight functions as the Fourier transform of wavelets, which are band-pass windows, $\text{Ave}_k \left( \E\{|\hat x[k]|^2\}\, |\hat \psi_{\lambda}[k]|^2 \right)$. The limited number of wavelet power spectrum coefficients has reduced estimation variance. In fact, they are also the diagonal elements of the wavelet covariance matrix, $W x[u,\lambda]W x[u,\lambda]^*=|W x[u,\lambda]|^2$, therefore an empirical estimation can also be written as an average over $u$:
\begin{equation}
    \label{spec-wave}
    M_2 = \Avu |W x[u,\lambda ]|^2.
\end{equation}
Similar to the power spectrum, phase cancellation due to translation invariance means that the off-diagonal blocks i.e. the cross-correlations between different wavelet frequency bands are nearly zero because the support of two wavelets $\hat \psi_\la$ and $\hat \psi_{\la'}$ are almost disjoint, as illustrated in Fig.~\ref{fig:FourierSupportalignment}(a). 

\subsubsection{Selected 3rd and 4th Order Wavelet Moments}

One may expect to compress the polyspectra in a similar manner with a wavelet transform, taking advantage of the regularities of the field probability distribution. However, it is non-trivial to logarithmically bin the polyspectra because more than one independent frequency is involved and the phase cancellation condition needs to be considered. 

To solve this problem, let us revisit the phase cancellation of two frequency bands, which causes their correlation to be zero, 
\begin{align}
    \E\{W x[u,\la]\,W x[u',\la']^*\} \sim 0\, ,
\end{align}
for $\la \neq \la'$. To create a non-zero correlation, we must realign the support of $W x[u,\la]$ and $W x[u',\la']$ in Fourier space through non-linear transforms. As shown in Fig.~\ref{fig:FourierSupportalignment}(b), we may apply a square modulus to one band (shown in blue) in the spatial domain, which recenters its frequency support at origin. Indeed, $|x \star \psi_\la |^2 = (x \star \psi_\la )(x \star \psi_\la )^*$ has a Fourier support twice as wide as that of $x \star \psi_\la$, and will overlap with another wavelet band with lower frequency than $\la$. The transformed fields $|x \star \psi_\la |^2$ can be interpreted as maps of locally measured power spectra. Correlating this map with another wavelet band $x \star \psi_\la'$ gives some third-order moments 
\begin{align}
    \E\{ |Wx|^2 [u,\la]\, Wx [u',\la']^* \}
\end{align}
that are a priori non-zero. 
Furthermore, for wide classes of multiscale processes having a regular power spectrum, it suffices to only keep the coefficients at $u=u'$ because of random phase fluctuation (see supplementary material~B). For stationary random fields, they can be estimated with an empirical average over $u$,
\begin{equation}
    \label{thirwavemo}
M_3 = \Avu( |Wx|^2[u,\la]\, Wx [u,\la']^* ).
\end{equation}
Now we obtain a set of statistics characterizing the dependence of Fourier coefficients in two wavelet bands in a collective way, which are selected third-order moments. They can be interpreted as a logarithmic frequency binning of certain bispectrum coefficients. There are about $R^2 J^2 = O(\log^2 L)$ such coefficients, which is a substantial compression compared to the $O(L^{2d})$ full bispectrum coefficients. 

%
\begin{figure}[!t]
\centering
\includegraphics[width=\columnwidth]{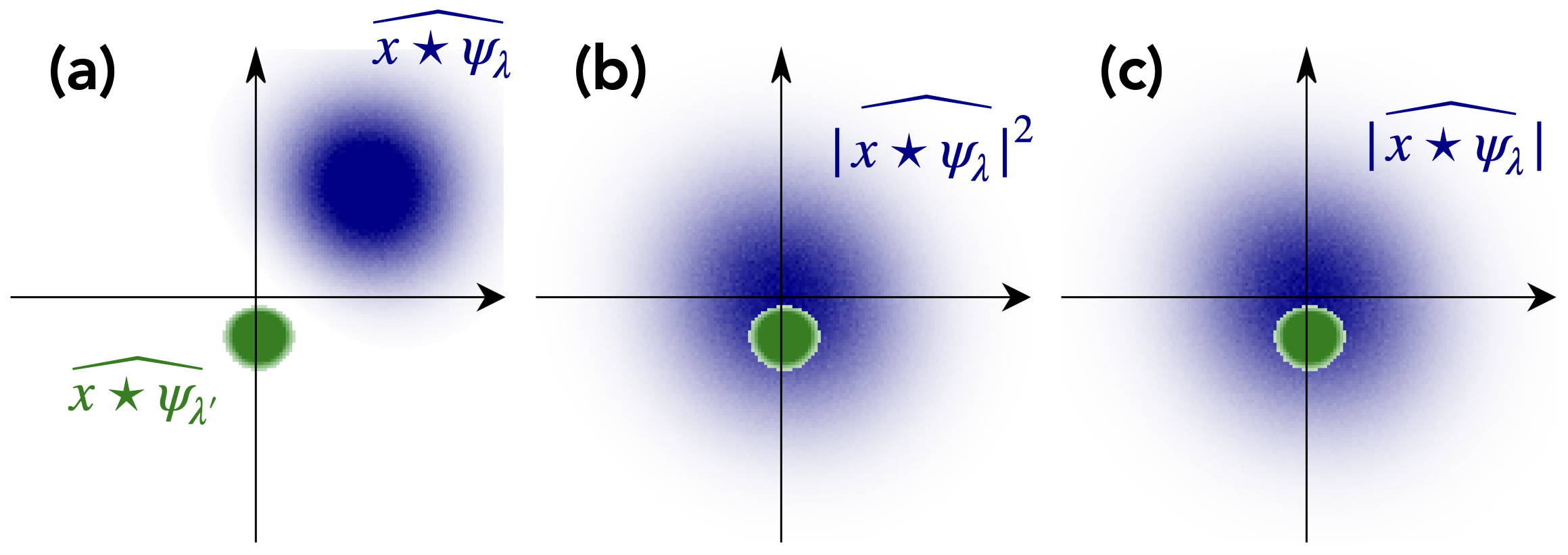}
\caption{(a): For $\lambda\ne\lambda'$ the Fourier supports of $x\star\psi_{\lambda}$ (blue) and $x\star\psi_{\lambda'}$ (green) typically do not overlap. (b): The Fourier support of $|x\star\psi_\lambda|^2$ is twice larger and centered at $0$ and hence overlaps with $x\star\psi_{\lambda'}$ if $|\la'| \leq |\la|$. (c): The Fourier support of $|x\star\psi_\lambda|$ is also centered at $0$ and hence overlaps with $x\star\psi_{\lambda'}$ if $|\la'| < |\la|$. 
}
\label{fig:FourierSupportalignment}
\end{figure}

Similarly, we consider the cross correlation between two wavelet bands both transformed by the square modulus operation and obtain a wavelet binning of fourth-order moments,
\begin{align}
    \E\{ |Wx [u,\la]|^2 \, |Wx [u',\la']|^2 \} - \E\{ |Wx [u,\la]|^2 \} \E\{ |Wx [u',\la']|^2 \}.
\end{align}
For stationary fields, this covariance only depends on $u-u'$. 
A further reduction of such a large covariance function is possible because its Fourier transform over $u-u'$ has two properties. First, it typically does not have higher frequency components than the initial wavelet transforms involved (see Fig.~\ref{fig:FourierSupportalignment}) as the phase fluctuations have been eliminated by the square modulus, and second, for fields with multiscale structures, it is regular and can be approximated with another logarithmic frequency binning. Thus, we can compress the large covariance function with a second wavelet transform, and estimate it by an empirical average over $u$:
\begin{equation}
    \label{fourthorder0}
    M_4 = \Avu ( W|Wx|^2[u,\la,\gamma]~~W|Wx|^2[u,\la',\gamma]^* ),
\end{equation}
where $(W|Wx|^2) [u,\lambda,\gamma] = |x \star \psi_{\lambda}|^2 \star \psi_{\gamma} [u]$, and the central frequencies of the second wavelets verifies $|\lambda|\geq |\lambda'|>|\gamma|$. There are about $R^3 J^3 = O(\log^3 L)$ such coefficients, which is also a substantial compression compared to the $O(L^{3d})$ full trispectrum coefficients. 

\subsection{Scattering Spectra}
\label{scat-spec-sec}

In general, the estimation of high-order moments has a high variance because high-order polynomials amplify the effect of outliers. An interesting idea learned from the scattering transform approach \citep{mallatscat, bruna2013invariant} is that the multiplication $(Wx)(W^{*}x)=|Wx|^2$ used in the higher-order moments~\eqref{spec-wave}, \eqref{thirwavemo}, \eqref{fourthorder0} can be replaced by the wavelet modulus $|Wx|$, which produces qualitatively similar estimators but with improved robustness and better efficiency in presence of sparse structures \footnote{More than a hundred years ago, astronomer Sir Arthur Eddington expressed in his book \citep[p.147]{Eddington_1914} a favor of mean-modulus error over mean-square error for similar reasons.}\citep{cheng2021quantify}.
The resulting moments after such a replacement only depend on the mean and covariance matrix of $(Wx, |Wx|)$, which are low-order transforms of the original field $x$. 

Local statistics of wavelet modulus have been studied to analyze properties of image textures \citep{portilla2000parametric}. Their mathematical properties have been analyzed to capture non-Gaussian characteristics of random fields \citep{mallat2020phase, zhang2021maximum} in relation to scattering moments \citep{mallatscat,bruna2013invariant}. Scattering spectra have been defined on one-dimensional time-series \citep{rudy}, from the joint covariance of a wavelet transform and its modulus: $(Wx\,,\,|Wx|)$. We extend it to fields of arbitrary dimension $d$ and length $L$, in relation to Fourier high-order moments, and define models of dimension $O(\log^3 L)$.

\subsubsection{First and second wavelet moments, sparsity}
For non-Gaussian fields $x$, wavelet coefficients $W x[u,\lambda]$ define fields which are often sparse \citep{olshausen_1996,stephane1999wavelet}.
This is a non-Gaussian property that can be captured by first-order wavelet moments $\E\{|W x[u,\lambda ]|\}$. If $x$ is a Gaussian random field then $W x[u,\lambda]$ remains Gaussian but complex-valued so, and we have $\frac{\E\{|W x|\}^2} {\E\{|W x|^2\}} = {\frac \pi {4}}$. This ratio decreases when the sparsity of $W x[u,\lambda ]$ increases. The expected value of $|Wx|$ is estimated by
\begin{equation}
\label{spec1}
    S_1(x)[\lambda] = {\Avu|W x[u,\lambda ]|} 
\end{equation}
and the ratio is calculated with the second-order wavelet spectrum estimator
\begin{equation}
\label{spec2}
S_2(x)[\la] = M_2(x)[\la] = \Avu (|W x|^2[u,\la]) .
\end{equation}

\subsubsection{Cross-Spectra between Scattering Channels}

Let us now replace $|Wx|^2$ by $|Wx|$ in the selected third and fourth-order wavelet moments described in the previous section. The third order moments \eqref{thirwavemo} become $\E\{ |Wx [u,\la]|\, Wx [u,\la']^* \}.$
Such moments are a priori non-zero if the Fourier transforms of $|Wx[u,\la]| = |x \star \psi_\la|$ and $Wx[u,\la'] = x \star \psi_{\la'}$ overlap. This is the case if $|\la'| < |\la|$ as illustrated in Fig.~\ref{fig:FourierSupportalignment}. Eliminating the square thus preserves non-zero moments which can capture dependencies between different frequencies $\la$ and $\la'$. The third order moment estimators given by \eqref{thirwavemo} can thus be replaced by lower cross-correlations between $|Wx|$ and $Wx$ at $|\la| \geq |\la'|$
\begin{equation}
\label{spec3}
    S_3(x)[\lambda,\lambda'] = {\Avu\,(|W x|[u,\lambda]\,\, W x[u,\lambda']^*)}.
\end{equation}

Replacing $|Wx|^2$ by $|Wx|$ in the fourth order wavelet moments \eqref{fourthorder0}
amounts to estimating the covariance matrix of wavelet modulus fields $|W x|$. As the $u-u'$ dependency of this covariance can also be characterized by a second wavelet transform, this amounts in turn to estimate the covariance of scattering transforms $W|Wx| [u,\lambda,\gamma] = |x \star \psi_{\lambda}| \star \psi_{\gamma} [u]$
\begin{equation}
\label{spec4}
    S_4(x) [\lambda,\lambda',\gamma] = \Avu \Big( W|Wx|[u,\lambda,\gamma]~~ W|Wx|[u,\lambda',\gamma]^*\Big) ,
\end{equation}
for $|\la| \geq |\la'| \geq |\gamma|$. It provides a wavelet spectral estimation of the covariance of $|W x|.$

Combining the moment estimators of Eqs.~(\ref{spec1},\ref{spec2},\ref{spec3},\ref{spec4}) defines a vector of scattering spectra  
\begin{equation}
\label{scatcorr}
    S (x) = \Big(S_1(x)\,,\,S_2(x)\,,\, S_3(x)\,,\, S_4(x) \Big) .
\end{equation}
It provides a mean and covariance estimation of the joint wavelet and wavelet modulus vectors $(W x , |W x|)$. It resembles the second, third, and fourth-order Fourier spectra but has much fewer coefficients and better information concentration. Considering the conditions satisfied by $\la$, $\la'$, and $\ga$, the exact dimension of $S(x)$ is $ RJ + R^2J(J-1)/8 + R^3 J(J^2-1)/48$, of the order $O(log^3 L)$.

\subsubsection{Renormalization} 

Scattering spectra coefficients must often be renormalized to improve the sampling of maximum entropy models. Indeed, multiscale random processes often have a power spectrum that has a power law decay $\E\{|\hat x[k]|^2\} \sim |k|^{-\eta}$ over a wide range of frequencies, long-range correlations corresponding to a strong decay from large to small scales. The wavelet spectrum also has a power-law decay $\E\{|Wx[u,\la]|^2 \} \sim |\la|^{-\eta}$. This means that if we build a maximum entropy model with $\Phi(x) = S(x)$ then the coordinate of $\Phi(x)$ of low-frequencies $\la$ have a much larger amplitude and variance than at high frequencies. The microcanonical model is then dominated by low frequencies and is unable to constrain high-frequency moments. The same issue appears when computing the $\theta_0$ parameters of a macrocanonical model defined in \eqref{moment-cond2}, for which it has been shown that renormalizing to $1$ the variance of wavelet coefficients at all scales avoid numerical instabilities~\citep{WCRG}. Without such a normalization, the calculation of $\theta_0$ parameters at different frequencies is ill-conditioned, which turns into a ``critical slowing down" of iterative optimization algorithms. The proposed normalization is closely related to Wilson renormalization.

We renormalize the scattering spectra by the variance of wavelet coefficients, $\sigma^2[\la]={\rm Ave}_i S_2(\sple{x_i})[\la]$, which can be estimated from a few samples.
The renormalized Scattering Spectra are
\begin{align}
    \bar S (x) = \Big(\bar S_1(x)\,,\,\bar S_2(x)\,,\, 
    \bar S_3(x)\,,\, \bar S_4(x) \Big)
\end{align}
defined by 
\begin{equation}
\label{eq:scat-spectra-normalized}
    \bar S_1(x)[\la] = \frac{S_1(x)[\la]} {\sigma[\la]}~~,~~\bar S_2(x)[\la] = \frac{S_2(x)[\la]}{\sigma^2[\la]}
\end{equation}
\begin{equation*}
    \bar S_3(x)[\la,\la'] = \frac{S_3(x)[\la,\la']} {\sigma[\la]\,\sigma[\la']}~, 
    ~\bar S_4(x)[\la,\la',\ga] = \frac{S_4(x)[\la,\la',\ga]} {\sigma[\la]\,\sigma[\la']}.
\end{equation*}
The microcanonical models proposed in this paper are built from these renormalized statistics and/or their reduced version described below.

\subsection{Dimensionality reduction for physical fields}

Though much smaller than the polyspectra representation, the scattering spectra $\bar{S}$ representation still has a large size. Assuming isotropy and scale invariance of the field $x$, a first-dimensional reduction can be performed that relies on the equivariance properties of scattering spectra with respect to rotation and scaling (see supplementary material~C). However, such invariances cannot be assumed in general. In this section, we propose to construct a low-dimensional representation by only assuming regularity under rotation or scaling of the scales involved in the scattering spectra representation. A simplified version of such a dimensional reduction has been introduced in~\cite{allys2019rwst}.
We refer the reader to supplementary material~D for technical details.

The goal of the reduction is to approximate the covariance coefficients $\bar{S}_3$ and $\bar{S}_4$, the most numerous, using only a few coefficients. This can be seen as a covariance matrix estimation problem.
To do so, we first use a linear transform to sparsify the covariance matrix and then perform a threshold clipping on the coefficients to reduce the representation.
We consider a linear transform $F\bar S = (\bar S_1, \bar S_2, F\bar S_3, F\bar S_4)$ with a pre-determined linear transform $F$ which stands for a 2D or 3D Fourier transform along all orientations, as well as a 1D cosine transform along scales, for $\bar S_3$ and $\bar S_4$. For fields with statistical isotropy or self-similarity, all harmonics related to the action of global rotation and scaling on the field $x$ should be consistent with zero, except for the zeroth harmonic. For general physical fields, we expect the statistics $\bar S(x)$ to have regular variations to the action of rotation or scaling of the different scales involved in its computation, which implies that its Fourier harmonics $F\bar S(x)$ have a fast decay away from the 0-th harmonic and $F\bar S(x)$ is a sparse representation.

Thresholding on a sparse representation is widely used in image processing for compression \citep[][]{chang2000adaptive}.
We use threshold clipping on the sparse representation $F\bar S$ to significantly reduce the size of the scattering spectra. Furthermore, when empirically estimating large but sparse covariance matrices such as $F\overline{S}$, thresholding provides Stein estimators \citep{Stein_1956} which have lower variance and are consistent\citep[e.g.,][]{Donoho_1994, Bickel_2008, Cai_2011, Fan_2013}.
As $\bar S_1$ or $\bar S_2$ are already small, we keep all of their coefficients.

There are different strategies available to set the threshold for clipping. We adopt a simple strategy which keeps those coefficients with $\mu(F\bar{S}) > 2\sigma(F\bar{S})$, where $\mu(F\bar{S})$ and $\sigma(F\bar{S})$ are the means and standard deviations of individual coefficients of $F\bar{S}$. These adaptive thresholding estimators achieve a higher rate of convergence and are easy to implement \citep{Cai_2011}. With multiple realizations from simulations, $\mu(F\bar{S})$ and $\sigma(F\bar{S})$ can be estimated directly. In the case where only a single sample field is available, $\sigma(F\bar{S})$ can be estimated from different patches of that sample field \citep[e.g.,][]{Sherman_1996}. We call $P\bar S$ the coefficients after thresholding projection:
\begin{equation}
    \label{PS}
    P\bar S = (\bar S_1, \bar S_2, P\bar S_3, P\bar S_4) = \text{thresholding } F\bar S.
\end{equation}
The compact yet informative set of scattering spectra $P \bar S$ is the representation $\Phi(x)=P\bar S(x)$ proposed in this paper to construct maximum entropy models.


\begin{figure*}
    \centering
    \includegraphics[width=0.90\textwidth]{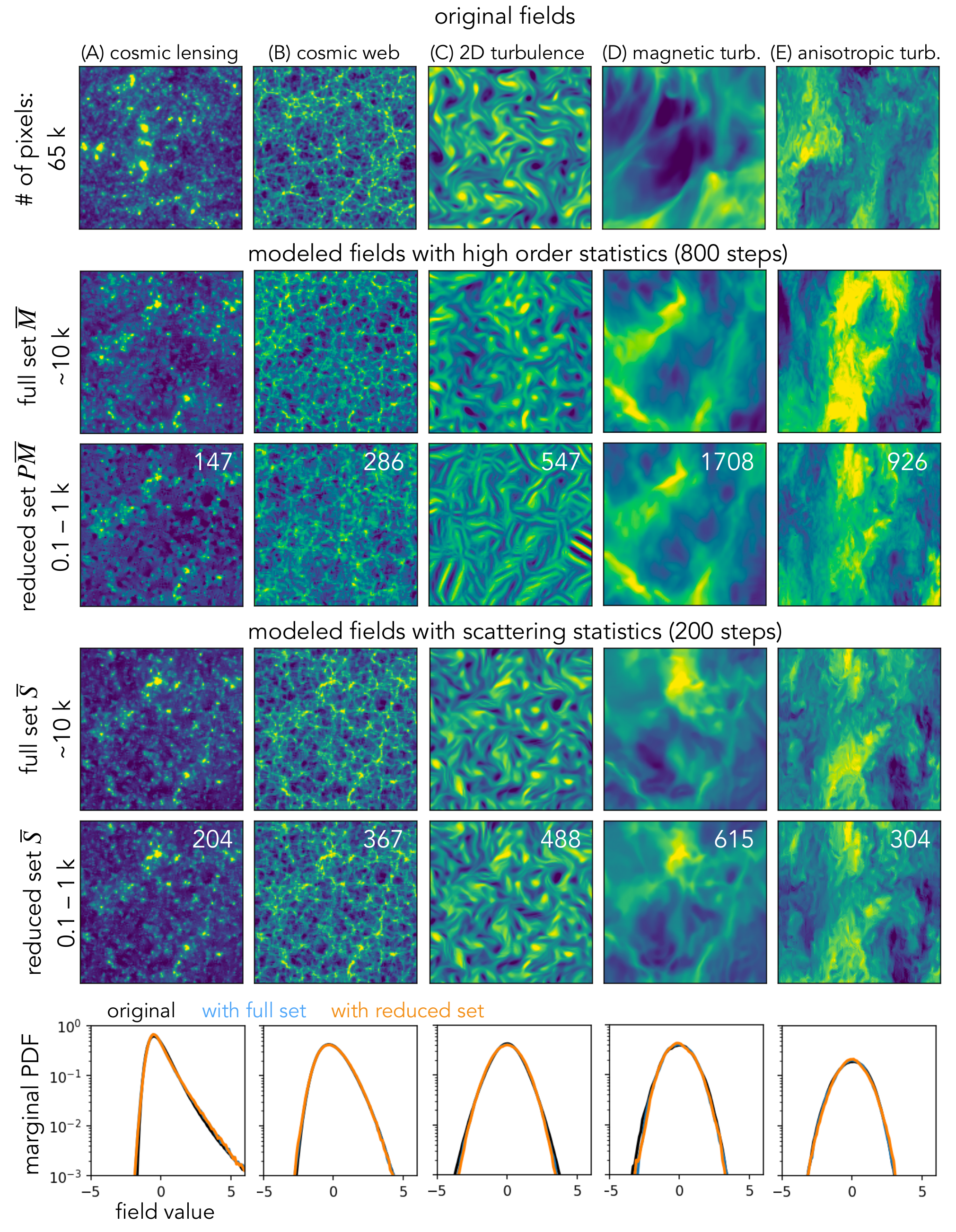}
    \caption{Visual comparison of realistic physical fields and those sampled from maximum entropy models based on wavelet higher-order moments $\bar M$ and wavelet scattering spectra $\bar S$ statistics. The first row shows five example fields from physical simulations of cosmic lensing, cosmic web, 2D turbulence, magnetic turbulence, and squeezed turbulence. The second and third rows show syntheses based on the selected high-order wavelet statistics estimated from 100 realizations. They are obtained from a microcanonical sampling with 800 steps. The fourth and fifth rows show similar syntheses based on the scattering spectra statistics, with only 200 steps of the sampling run. This figure shows visually that the scattering spectra can model well the statistical properties of morphology in many physical fields, while the high-order statistics either fail to do so or converge at a much slower rate. To clearly show the morphology of structures at small scales, we show a zoom-in of 128 by 128 pixels regions. Finally, to quantitatively validate the goodness of the scattering model, we show the marginal PDF (histogram) comparison in the last row.
    }
    \label{fig:synthesis}
\end{figure*}

\section{Numerical Results}
\label{numerical}

We have introduced maximum entropy models based on small subsets of $O(\log^3 L)$ scattering spectra moments $\bar{S}$ and projected moments $P \bar{S}$, claiming that it can provide accurate models of large classes of multiscale physical fields, and reproduce $O(L^{3d})$ power spectrum, bispectrum and trispectrum Fourier moments. This section provides a numerical justification of this claim with five types of 2D physical fields from realistic simulations. 
In order to reduce the variance of the validation statistics, we consider in this section a model estimated on several realizations of a field. However, our model also produces convincing realizations when estimated on a single realization (see Fig.~S2 for a visual assessment).
All computations are reproducible with the software available on \url{https://github.com/SihaoCheng/scattering_transform}.

\subsection{Dataset of Physical Fields}

We use five two-dimensional physical fields to test the maximum entropy models. 
The five fields are chosen to cover a range of properties in terms of scale dependence, anisotropy, sparsity, and morphology:
\begin{itemize}
    \item[(A)] {\it Cosmic lensing :} simulated convergence maps of gravitational lensing effects induced by the cosmological matter density fluctuations
    \citep{matilla2016dark, gupta2018non}. 
    \item[(B)] {\it Dark matter:} logarithm of 2D slices of the 3D large-scale distribution of dark matter in the Universe~\citep{quijote2020}.
    The logarithm allows for the filamentary cosmic web structures to stand out and thus increases the morphological diversity of our examples, which we discuss more in supplementary material~G.
    \item[(C)] {\it 2D turbulence:} turbulence vorticity fields of incompressible fluid stirred at the scale around 32 pixels, simulated from 2D Navier-Stokes equations~\citep{schneider2006coherent}.
    \item[(D)] {\it Magnetic turbulence:} column density of 3D isothermal magnetic-hydrodynamic (MHD) turbulent simulations~\citep{allys2019rwst}.
    The field is anisotropic due to a mean magnetic field in the horizontal direction. 
    \item[(E)] {\it Anisotropic turbulence:} two-dimensional slices of a set of 3D turbulence simulations~\citep{Li_2008, Perlman_2007}. 
    To create anisotropy, we have squeezed the fields along the vertical direction.
\end{itemize}
These simulations are sampled on a grid of 256$\times$256 pixels with periodic boundary conditions\footnote{When working without this condition, statistics can be computed by padding the images.} and normalized to have zero mean and unity standard deviation, respectively. Samples of each field are displayed in the first row of Fig.~\ref{fig:synthesis}. To clearly show the morphology of small-scale structures, we zoom in to a 128$\times$128 region.

\begin{figure}
    \centering
    \includegraphics[width=\columnwidth]{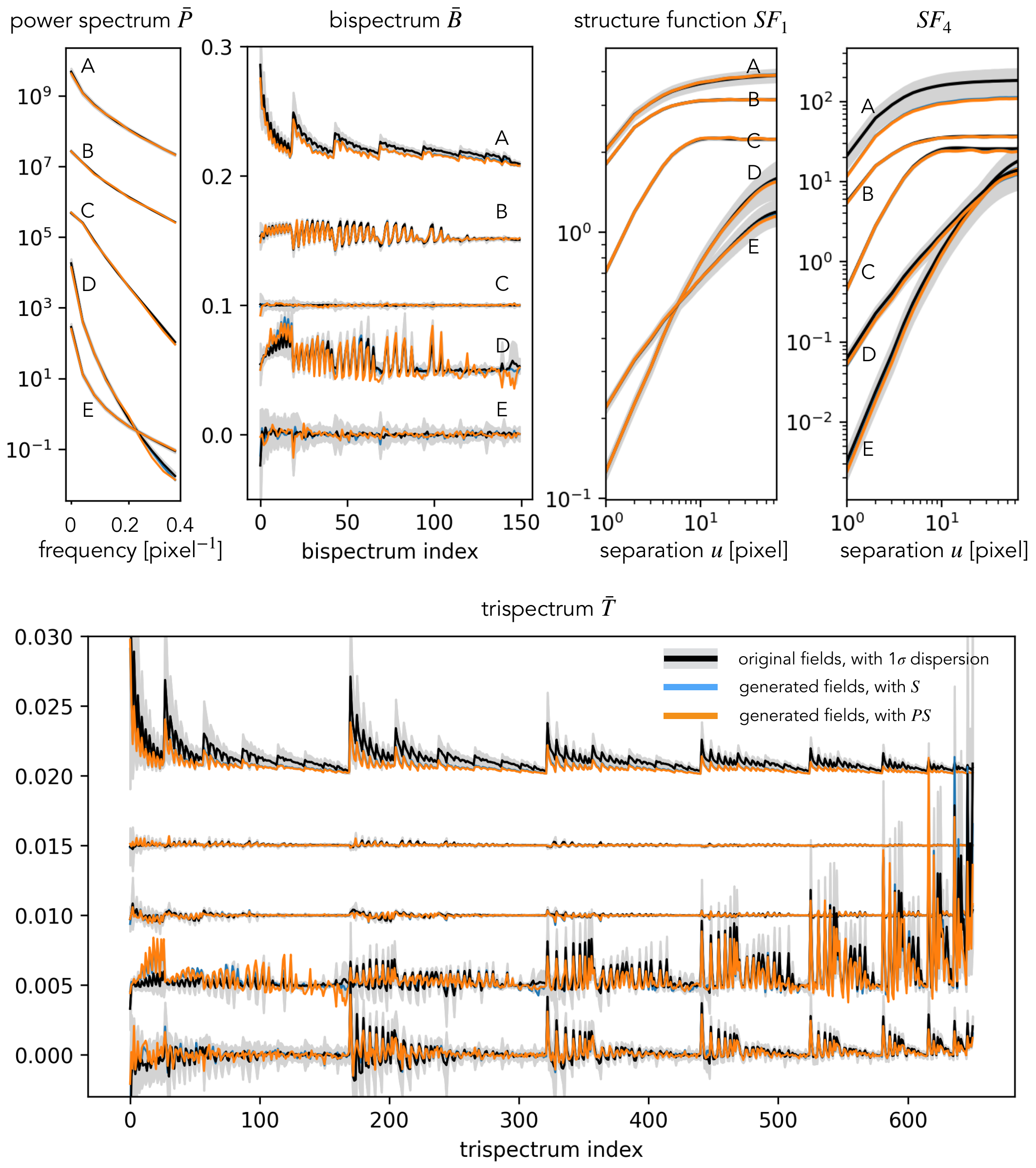}
    \caption{
    Validation of the scattering maximum entropy models for the five physical fields A--E by various test statistics. The curves for field E represent the original statistics and those for A--D are shifted upwards by an offset.
    In general, our scattering spectra models well reproduce the validation statistics of the five physical fields.
    }
    \label{fig:polyspectra}
\end{figure}

\subsection{Model description and visual validation}

We fit our maximum entropy model using wavelet polyspectra and scattering spectra, respectively, with the following constraint,
\begin{equation}
\label{eq:model}
    || {\underset {j} {\rm Ave}\,} \Phi(x_j) - {\underset {i} {\rm Ave}\,} \Phi(\sple{x}_i)||^2 \leq \epsilon
\end{equation}
where the second average is computed on an ensemble of 100 realizations $\sple{x}_i$ for each physical simulation (for field D we use only 20 realizations due to the availability of simulations), and the field generation is performed simultaneously for 10 fields $x_j$, making our microcanonical model closer to its macrocanonical limit. The microcanonical sampling algorithm is described in supplementary material~A.

Examples of field generation results are given in Fig.~\ref{fig:synthesis} The second row shows samples generated based on the high-order normalized wavelet moments $\Phi(x) = \bar M (x) = (\bar M_2(x), \bar M_3(x), \bar M_4(x))$, where $\bar M_2 = \bar S_2$, $\bar M_3(x)[\lambda, \lambda'] = \frac{M_3(x)[\lambda, \lambda']}{\sigma^2[\lambda]\sigma[\lambda']}$ and $\bar M_4(x)[\lambda, \lambda'] = \frac{M_4(x)[\lambda, \lambda']}{\sigma^2[\lambda]\sigma^2[\lambda']}$ are defined similarly to $\bar S$ in~\eqref{eq:scat-spectra-normalized}.
For the choice of wavelets, we use J=7 dyadic scales, and we set $R=8$ which samples 4 orientations within $\pi$, resulting in $\mathrm{dim}\,\bar M$ = 11\,677 coefficients for $\bar M$. The third row in Fig.~\ref{fig:synthesis} shows results from a reduced set $\Phi(x) = P\bar M(x)$, which is a 2$\sigma$ Fourier thresholded representation of $\bar M$ defined in exactly the same way as $P\bar S$ in \eqref{PS}. 
The thresholding yields $\mathrm{dim}\,P\bar M$ = 147, 286, 547, 1708, 926 for fields A--E, respectively.
A visual check shows that these models fail to recover all morphological properties in our examples especially when a thresholding reduction is applied. This issue is a manifestation of the numerical instability of high-order moments. 

In the fourth row, we present sample fields modeled with the scattering spectra $\overline{S}$ with $\mathrm{dim}\,P\bar S$ = 11\,705 for J=7 and R=8. A visual check reveals its ability to restore coherent spatial structures including clumps, filaments, curvy structures, etc. The low-order nature and numerical stability of $\overline{S}$ also significantly fasten the sampling compared to the high-order moments $\bar{M}$ (200 vs. 800 steps to converge). The last row shows sample fields modeled by a much smaller set $P\overline{S}$, which has 
$\mathrm{dim}\,P\bar S$ = 204, 364, 489, 615, 304
coefficients for fields A--E, respectively. This model is $\sim 10^2$ times smaller, while generating samples visually indistinguishable from the full set model with $\Phi(x)=\overline{S}(x)$.
In addition, the ratio between the dimensionality of the field $\mathrm{dim}\,x = L^d$ (the number of pixels) and the model $\mathrm{dim}\,\Phi$ is more than 100.
For interested readers, we also present the improvement of modeling from using power spectrum alone to the full scattering spectra in supplementary material~F. 
\subsection{Statistical Validation}
\label{sta-val}

We now quantify the consistency between the scattering spectra models and the original fields using a set of validation statistics $V(x)$ defined below, including marginal PDF, structure functions $SF_n$, power spectrum $P$, and normalized bispectrum $\bar B$ and trispectrum $\bar T$. The validation statistics are shown in Figs.~\ref{fig:synthesis} and~\ref{fig:polyspectra}, where black curves represent the expected value $\mu_\text{original}$ of these statistics, estimated from 100 realizations $\bar x_i$ of the original simulated fields (except for field D for which we have only 20 realizations).
Gray regions around the black curves represent the standard deviations $\sigma_\text{original}$ of those statistics estimated on the original fields.
Blue curves are statistics $\mu_{\bar S,\text{model}}$ estimated on fields modeled with $\overline{S}$. Similarly, $\mu_{P \bar S, \text{ model}}$ are estimated on fields modeled with the reduced set $P\overline{S}$. Both these averages are estimated from the 10 fields simultaneously sampled from the corresponding microcanonical models.

The marginal probability distribution function (PDF) is measured as the histogram of sample fields and shown in Fig.~\ref{fig:synthesis}. It averages out all spatial information and keeps only the overall asymmetry and sparsity properties of the field. The marginal information is not explicitly encoded in the scattering spectra, but for all the five physical fields we examine here, it is recovered even with the reduced model $P \bar S$, where only $\sim 10^2$ scattering spectra coefficients are used.

Given that the high dimensionality of the full set of polyspectra coefficients, as well as the computational cost of estimating them properly,
we adopt an isotropic shell binning for the power spectrum, bispectrum, and trispectrum.
Although this reduces the number of coefficients as well as their variance, working with isotropic statistics prevents the characterization of anisotropic features, for instance in fields D and E, unlike with scattering spectra. Validation results with these isotropic polyspectra are given in Fig.~\ref{fig:polyspectra}. 

The shell binning is defined as follow. We first divide the Fourier space into 10 annuli with the frequencies linearly spaced from 0 to $0.4$ cycles/pixel. Then, we average the power and poly spectra coefficients coming from the same annulus combinations. For instance, the power spectrum yields:
\begin{align}
    P[i] = {\underset {k \text{ in annuli }i} {\rm Ave}\,}\hat{x}[k]\hat{x}[-k].
\end{align}
To decorrelate the information from the power spectrum and higher orders, we normalized the binned bi- and tri-spectra by $P[i]$:
\begin{align}
    \bar B[i_1, i_2, i_3] = \frac{{\underset {k_n \text{ in annuli } i_n} {\rm Ave}\,}\hat{x}[k_1]\hat{x}[k_2]\hat{x}[k_3]}{\sqrt{P[i_1] P[i_2] P[i_3]}},
    \label{eq:B}
\end{align}
\begin{align}
\bar T[i_1, i_2, i_3, i_4] = \frac{{\underset {k_n \text{ in annuli } i_n} {\rm Ave}\,}\hat{x}[k_1]\hat{x}[k_2]\hat{x}[k_3]\hat{x}[k_4]}{\sqrt{P[i_1] P[i_2] P[i_3] P[i_4]}},
\end{align}
where the $k_n$ $d$-dimensional wave-vectors are respectively averaged in the $i_n^\text{th}$ frequency annuli, and satisfy $\sum_n k_n = 0$. 
To clearly reveal the diversity of different type of physical fields, the trispectrum $\bar T$ coefficients shown in Fig.~\ref{fig:polyspectra} are subtracted by the reference value of Gaussian white noise, evaluated numerically on 1000 independent realizations. Details about the numbers and the ordering of $\bar B$ and $\bar T$ are given in supplementary material~E.

In Fig.~\ref{fig:polyspectra} we also show the validation with structure functions, which are $n$-th order moments of the field increments as a function of the position separation $\Delta u$
\begin{align}
    SF_n[\Delta u] = \Avu \big|x[u]-x[u + \Delta u]\big|^n.
\end{align}
In our 2D case, we further average over $\Delta u$ with different orientations to obtain a structure function only depending on the magnitude of the separation $|\Delta u|$. Initially proposed by Kolmogorov for the study of turbulent flows \citep{kolmogorov1941local}, they are widely used to analyze non-Gaussian properties of multiscale processes \citep{jaffard2004wavelet}.

We quantify the discrepancy between the model and original field distributions by the outlier fraction of validation statistics outside the 2$\sigma$ range,
\begin{align}
    |\mu_\text{model} - \mu_\text{original}| / \sigma_\text{original} > 2\,.
\end{align}
For each of the five types of fields, we observe the following fractions. The binned power spectrum has fractions of $P$: 0\%, 0\%, 20\%, 0\%, 0\% for the models using all $\bar S$ statistics and 0\%, 10\%, 40\%, 10\%, 0\% for the thresholding models with $P\bar S$. The power spectrum deviation of field C is likely caused by the longer convergence steps required by smooth fields, as our generative models start from white noise with strong small-scale fluctuations. Indeed increasing the steps to 800 reduces the outlier fraction of the $P\bar S$ model to 10\%. For $\bar B$ and $\bar T$, the outlier fractions are all below 5\% except for the models of field A, where the bispectrum coefficients have 13\% of outliers. Those outliers all have the smallest scale involved, and disappear if the high-frequency cut is moved from 0.4 to 0.35 cycles/pixel. The low fractions demonstrate consistency between our maximum entropy models and ensembles of the original physical fields.

For field A, a similar deviation is also observed in high-order structure functions. For this field, it can be seen from Fig.~\ref{fig:polyspectra} that even though many coefficients are not defined as outliers, they all tend to have a lower value than the original ones. This effect may originate from the log-normal tail of the cosmic density field \citep{Coles_1991}, whose Gibbs potential includes terms in the form of $\log{x}$, in contrast to the form of $|x|$ in scattering spectra or $x^n$ in high-order statistics. However, regardless of this difficulty,
these outliers are all still within a 3$\sigma$ range, demonstrating that the scattering spectra provide a good approximation though not exact model for fields with such heavy tails.

The marginal PDF, structure functions, power spectrum and polyspectra probe different aspects of the random field $p(x)$. The polyspectra especially probe a huge variety of feature configurations. For all the validation statistics, we observe general agreement between the model and original fields. Such an agreement is a non-trivial success of the scattering spectra model, as those statistics are not generically constrained by the scattering spectra for arbitrary random fields. They indeed significantly differ from the scattering spectra in the way they combine spatial information at different frequencies and in the non-linear operation adopted. The agreement implies, as we have argued, that symmetry and regularity can be used as strong inductive bias for physical fields and the scattering spectra, with those priors build-in, can efficiently and robustly model physical fields.

\begin{figure}[t]
    \includegraphics[width=\columnwidth]{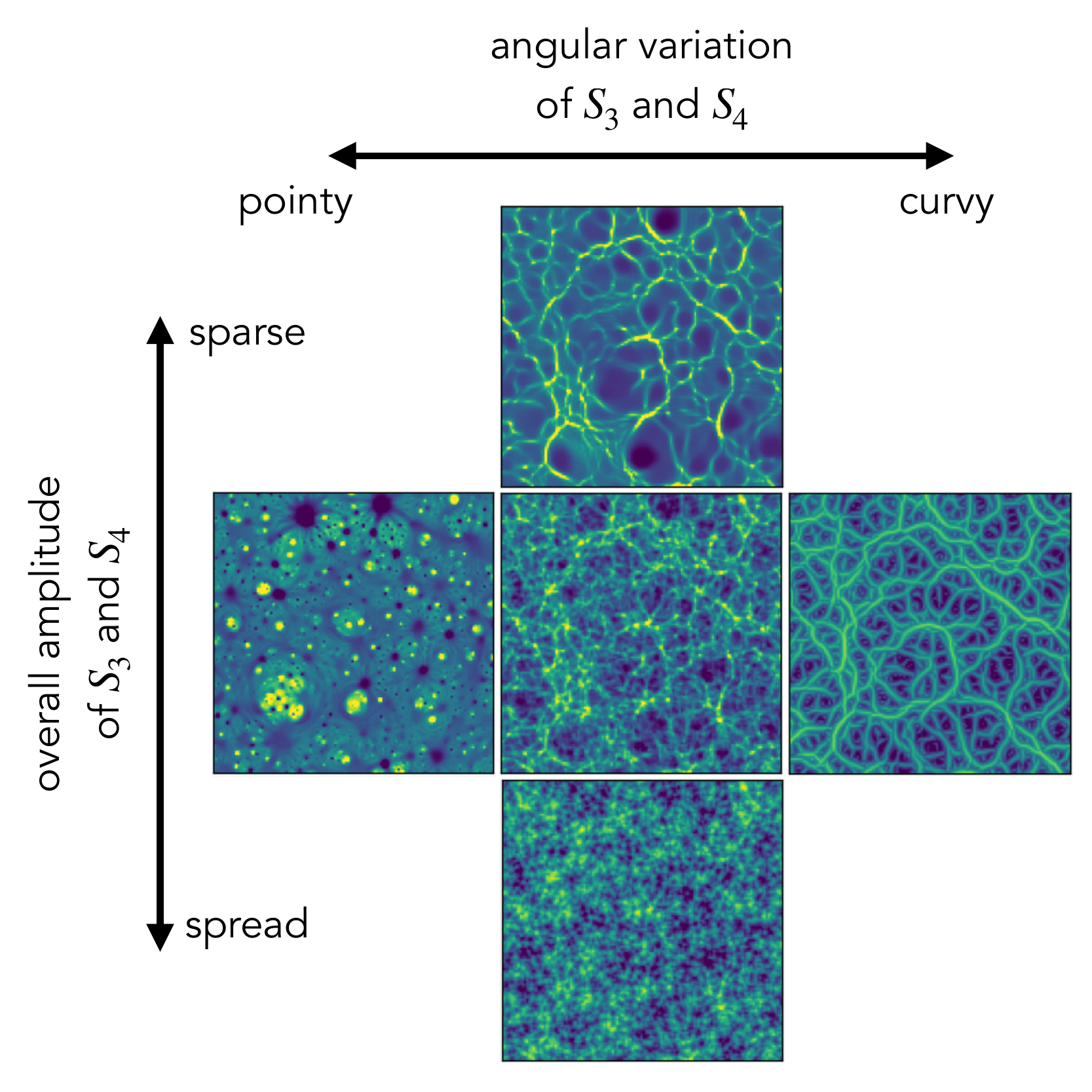}
    \caption{Visual interpretation of the scattering spectra. The central field is one realization of field B in physical simulations. The other four panels are generated fields with two simple collective modifications of the scattering spectra coefficients.}
    \label{fig:synthesis_interpretation}
\end{figure}

\subsection{Visual Interpretation of Scattering Spectra Coefficients} 

The key advantage of the scattering spectra compared to usual convolutional neural networks is their structured nature: their computation corresponds to the combination of known scales and orientations in a fixed way. Beyond the limited number of symmetries, the structured nature of the scattering spectra allows us to both quantify and interpret the morphology of structures, which is one of the original goals to design these statistics. 

The values of scattering spectra can be shown directly (see Fig.~S3) to analyze non-Gaussian properties of the field. Moreover, the meaning of its coefficients can also be visualized through our maximum entropy generative models. As one gradually changes the value of some summary statistics, the morphology of structures in the generated fields also changes. A similar exploration for a smaller set of scattering transform coefficients has been explored in \cite{cheng2021quantify}, and we show such results with the much more expressive scattering spectra coefficients in Fig~\ref{fig:synthesis_interpretation}. Such exploration using synthesis is also similar to the feature visualization efforts for convolutional neural networks \citep{olah2017feature}.

The central panel is a realization of field B from physical simulations. The other four panels are generated fields with two collective modifications of the scattering spectra: the vertical direction shows the effect of multiplying all $\bar S_3$ and $\bar S_4$ coefficients by a factor of 1/3 or 3. It indicates that the amplitude of $\bar S_3$ and $\bar S_4$ controls the overall non-Gaussian properties of the field and in particular the sparsity of its structures.
The horizontal direction corresponds to adjusting the orientation dependence. We set the coefficients with parallel wavelet configurations (i.e.,  $\bar S_3[||\lambda|, |\lambda'|, l_1=l_2]$ and $\bar S_4[|\lambda|, |\lambda'|, |\gamma|, l_1=l_2=l_3]$) as references and keep them unchanged. Then, we make the difference from other coefficients to those references to be 2 times or --2 times the original difference. Visually, it controls whether structures are more point-like or more curvy-like in the field. In this experiment, the generated field is initialized with the original field instead of white noise, in order to clearly show the correspondence between the field structure and scattering spectra coefficients.

\subsection{Application to Identifying Symmetry}

As an expressive representation whose coefficients are equivariant under standard group transformation, the scattering spectra can also be used to detect and identify the various statistical invariances commonly present in physical fields. Besides the aforementioned rotation and scaling invariance, more can also be included, such as the flipping of coordinate or field values. 

The simplest way to check asymmetry to a transformation like rotation or flip is to check if the scattering spectra $S$ are changed after applying such a transform. A more sophisticated way that can also quantify partial symmetries is to linearly decompose $\bar S$ into symmetric and asymmetric parts and then compute the fraction of asymmetric coefficients surviving the thresholding reduction. 
We further normalize this fraction by that in the full set to eliminate the dependence on image size: 
\begin{align}
    \text{asymmetry index} = \frac{\text{dim}(P{\bar S}_\text{asym})}{\text{dim}(P{\bar S})} / \frac{\text{dim}({\bar S}_\text{asym})}{\text{dim}({\bar S})}.
\end{align}
When it is zero, the random field $p(x)$ should be invariant to the transform up to the expressivity of our representation. For the five random fields analyzed in this study, we measure their asymmetry indices with respect to rotation and scaling. The corresponding anisotropy and scale dependence indices are (A) 0, 0.16 ; (B) 0, 0.53; (C) 0, 0.66; (D) 0.32, 0.45; (E) 0.28, 0.29. As expected, the cosmic lensing field (field A) is closest to isotropic and scale-free, because the scale range of the simulated field (approximately 80 Mpc in physical size) falls in the non-linear scale of cosmic structure formation and thus consists of peaks with all sizes and strengths. The cosmic web (B) and 2D turbulence (C) fields are isotropic but not scale-free, because they have particular physical scales above which the field becomes Gaussian: for cosmic web it is around 150 Mpc (25 pixel), and for turbulence it is the scale of driving force (32 pixel), both in the middle of the scale range of our simulations. The last two turbulence fields have anisotropic physical input, but the latter largely probes the `inertial regime' of turbulence, which is scale-free. 
\subsection{Limitations}

\begin{figure}
    \centering
    \includegraphics[width=\columnwidth]{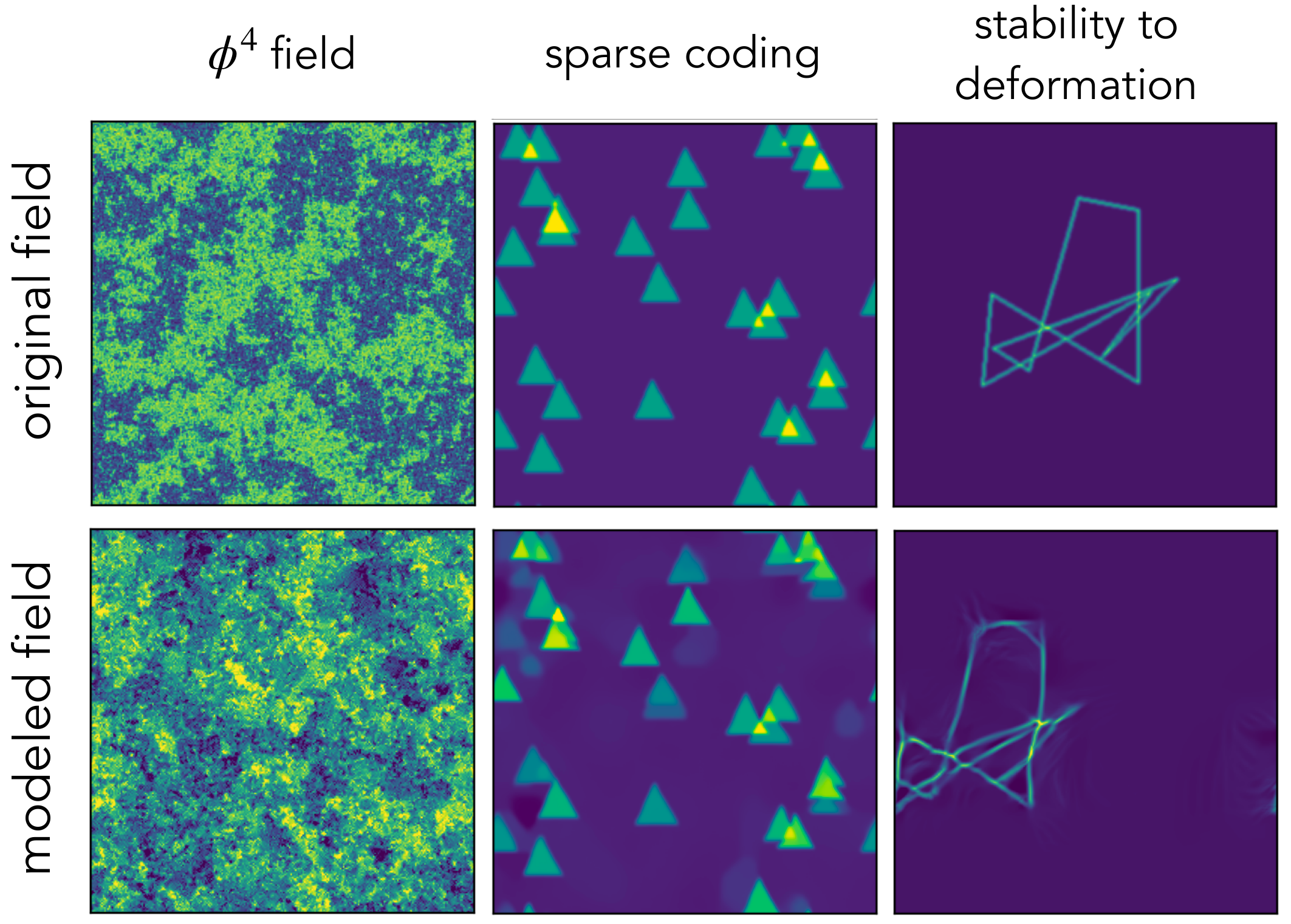}
    \caption{Example of failures and applications beyond typical physical fields. The modeled field of the central panel has been recentered for easier comparison with the original ones.}
    \label{fig:limit}
\end{figure}

While a broad range of physical fields satisfy the implicit priors of the scattering spectra, one does expect regimes for which the description will not be appropriate. The so-called $\varphi^4$ field in physics comes as a first problematic example. It is the maximum entropy field under the power spectrum and pointwise fourth-order moment $x^4$ constraints, but this characterization is unstable to specify a non-convex pdf which is a pointwise property as opposed to the delocalized Fourier moments and it is highly unstable at critical points \citep{WCRG}. The first column in Fig.~\ref{fig:limit} shows an original $\varphi^4$ field at its critical temperature and that generated from the full set of scattering spectra. In contrast to previous examples, this type of field is not successfully reproduced. 

On the other hand, when built based on one example field $x_1$ and generating only one realization $\bar{x}_1$ (i.e., in Eq.~\ref{eq:model} both $i$ and $j$ are 1), our model has a risk of over-fitting: it almost exactly copies the original field with an arbitrary translation and does not provide enough randomness. It can also be seen as a transition from generative modeling regime into a coding regime. This is related to the fact that for maximum entropy models, when the number of constraints amounts to a considerable fraction of the number of total degree of freedom, the microcanonical distribution deviates significantly from the macrocanonical distribution, and has a much lower entropy. The middle panel of Fig.~\ref{fig:limit} illustrate this effect, where the relative position of triangles of the modeled field is exactly copied from the original field. It happens only when the field is sparse, and when the full set $\bar{S}$ is used. This problem can be avoided by increasing the number of input fields or generated fields, or an early stop in the microcanonical sampling.

For physical fields with multi-scale structures, it is expected that the distribution function $p(x)$ does not change much under a slight deformation. When modeling such fields, it is important to have a representation that has the same property. Being built from wavelet decomposition and contracting operator, the scattering spectra also linearize small deformation in the field space, which plays an important role in lowering its variance~\citep{bruna2013invariant}. However, when modeling structured fields whose distribution functions are not regular under deformation, this means that the generative model will simply produce structures that are  close enough up to small deformations. This typical type of failure is shown in the third example of Fig.~\ref{fig:limit}.


\section{Conclusion}

We build maximum entropy models for non-Gaussian random fields based on the scattering spectra statistics. Our models provide a low-dimensional structured representation that captures key properties encountered in a wide range of stationary physical fields, namely:
(i) stability to deformations as a result of local conservation laws in Physics for mass, energy, momentum, charge, etc;
(ii) invariance and regularity to rotation and scaling;
(iii) scale interactions typically not described by high-order statistics;
Those are the priors included in the scattering spectra.

Our models provide a practical tool for generating mock fields based on some example physical fields. In sharp contrast to neural network models, our representation has the key advantage of being interpretable and can be estimated on a few realizations. This is crucial in Physics where generating fields in experiments or simulations is costly or when non-stationarity limits the amount of clean recorded data. 
Our proposed approach enables a new range of data/simulation analyses \citep[e.g.][]{regaldo2021new,delouis2022non}, involving extensions to the modeling of cross-regularities when multiple channels are available \citep[e.g.][]{regaldo2023generative}.

\section{Acknowledgments}
S.C. thanks Siyu Yao for her constant inspiration and encouragement.

\section{Funding}
We acknowledge funding from the French government as part of the “Investissements d’avenir” program ANR-19-P3IA-0001 (PRAIRIE 3IA Institute). S.C. acknowledges the support of the Martin A. and Helen Chooljian Member Fund, fund from Zurich Insurance Company, and the Fund for Natural Sciences at the Institute for Advanced Study. B.M. acknowledges support from the David and Lucile Packard Foundation.

\section{Author contributions statement}
S.C. developed part of the algorithm, most of the implementation and numerical experiments, introduced the correspondence to higher-order moments, and had an important writing contribution. R.M. developed some of the core mathematics, and algorithms related to self-similarity and their implementation, and wrote the corresponding sections and appendices. E.A. introduced transformations related to rotation symmetries and reduced representations and participated in numerical experiments and their expositions. B.M. provided his physical expertise to define the problem, evaluate solutions and present results in the paper. S.M. provided an important part of the mathematical framework of the representation and organized the teamwork and written paper.

\section{Competing interest statement}

Authors declare no competing interest.


\section{Preprints}
A preprint of this article is published at
\href{https://doi.org/10.48550/arXiv.2306.17210}{\url{https://doi.org/10.48550/arXiv.2306.17210}}.

\section{Data availability}

The code used in this paper can be found on the github: \href{https://github.com/SihaoCheng/scattering_transform}{\url{https://github.com/SihaoCheng/scattering_transform}}.
Data used for this work were previously published, including the Dark Matter dataset \citep{matilla2016dark, gupta2018non} from the Columbia lensing group accessible at \href{http://columbialensing.org}{\url{http://columbialensing.org}}, the Quijote Simulations \citep{quijote2020} available at \href{https://quijote-simulations.readthedocs.io}{\url{https://quijote-simulations.readthedocs.io}}, the turbulence simulations published in \cite{schneider2006coherent} and \cite{allys2019rwst}, and the forced isotropic turbulence simulation \cite{Li_2008, Perlman_2007} from the Johns Hopkins Turbulence Database accessible at \href{http://turbulence.pha.jhu.edu}{\url{http://turbulence.pha.jhu.edu}}.


\section{Supplementary Material}

\section{A. Microcanonical Sampling}
\label{app:micro-sec}

Given $n$ observed samples $\overline{x}_1,\ldots,\overline{x}_n$ of a field, with possibly $n=1$, the microcanonical ensemble given in eq. (5) can be extended as follow:
\begin{equation}
\label{micro-set1}
\Omega_\epsilon = \Big\{ x_1,\ldots,x_m \in \R^{L^d}
~:~
\|\Av{j}\Phi(x_j) -   \Avi\Phi(\overline{x}_i) \|^2 \leq \epsilon \Big\} .
\end{equation}
Microcanonical models are maximum entropy distributions over $\Omega_\epsilon$, which have a uniform distribution over this ensemble. Increasing the number of samples $n$ reduces the variance of ${\rm Ave}_i\Phi(\overline{x}_i)$ which concentrates around $\E\{\Phi(x)\}$. This reduces the information about a specific realization which is contained in ${\rm Ave}_i\Phi(\overline{x}_i)$, thus limiting over-fitting.

Sampling from the microcanonical model amounts to drawing a realization from a uniform distribution in $\Omega_\epsilon$. We approximate this sampling with a gradient descent algorithm studied in~\cite{bruna2019multiscale}. This algorithm progressively transports a white Gaussian noise distribution, which has a higher entropy than the microcanonical model, into distributions supported in $\Omega_\epsilon$. This is done with a gradient descent on $\ell(y_1,\ldots,y_m) = \| {\rm Ave}_j\Phi(y_j) - {\rm Ave}_i\Phi(\sple{x}_i) \|^2$, where the $y_j$ are initialized as independent realizations of white noises. At each iteration, the $y_i$ are updated with the L-BFGS-B algorithm, which is a quasi-Newton method that uses an estimate of the Hessian matrix. In practice, we perform $200$ gradient descent steps which yield a typical error $\epsilon\approx10^{-4}$.

It is proved in~\cite{bruna2019multiscale} that this algorithm converges to a distribution that has the same symmetries as $\Phi(x)$, similarly to the microcanonical one. However, it has been shown that this algorithm recovers a maximum entropy distribution in $\Omega_\epsilon$ only under appropriate conditions and that such gradient descent models may differ, in general, from maximum entropy ones. Nevertheless, these algorithms provide powerful sampling methods to approximate large classes of high-dimensional stationary processes, while being much faster and computationally tractable than alternative MCMC algorithms. 

\begin{figure}
	\centering
    \includegraphics[width=\columnwidth]{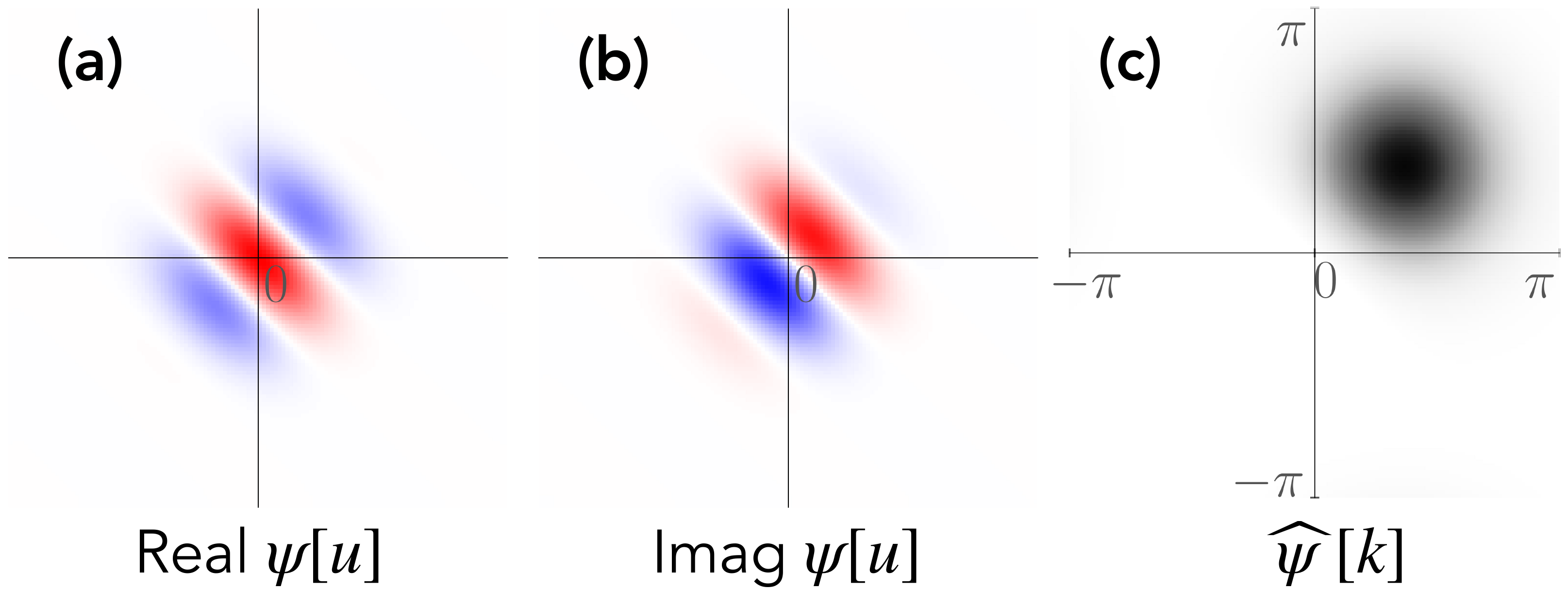}
	\caption{Real and imaginary parts of a Morlet wavelet $\psi[u]$ and its Fourier transform $\hat \psi[k]$, used in numerical calculations.}
	\label{FigWavelet}
\end{figure}

\section{B. Wavelets in $\R^d$ and Scattering Spectra}
\label{app:wavelet}

A Morlet wavelet $\psi$ defined on $\R^d$ is the product of a Gaussian envelope with a sinusoidal wave
\begin{align}
\psi[u] = g_{\sigma}[u](e^{i\langle\xi,u\rangle}-c) 
~~ \text{with} ~~ 
g_\sigma[u]
= \frac{1}{(\sigma\sqrt{2\pi})^d}e^{-\frac{\|u\|^2}{2\sigma^2}},
\end{align}
where $c$ is chosen so that $\int \psi[u]\,du = 0$. In practice, the envelope $g_{\sigma}$ is an elliptical Gaussian window to increase the angular resolution of $\psi$, but this does not virtually modify our discussion.
Such a wavelet recovers variations around scale $2^j$ in the direction of $\xi$. It is invariant to any rotation of $\R^d$ that fixes $\xi$. In practice we choose $\xi=(3\pi/4,0,\ldots,0)$ and $\sigma=0.8$.
For simplifying equations in appendices we assume $\|\xi\|=1$, without loss of generality. To recover variations at other scales and in other directions we define the wavelet filters
\[
\psi_\lambda[u] = 2^{-jd}\psi[2^{-j}r^{-1}u] ~~ \text{with} ~~ \lambda = 2^{-j} r^{-1} \xi
\]
for $(j, r) \in \R \times SO(d)$. In Fourier, $\widehat\psi_\lambda$ is a Gaussian centered in $\lambda$ substracted by a Gaussian centered in $0$ so that $\widehat\psi_\lambda[0] = 0$
\[
\widehat\psi_\lambda[k] = \widehat\psi[2^{-j}r^{-1}k] ~~ \text{with} ~~ 
\widehat\psi[k] = e^{-\frac{\sigma^2}{2}\|k-\xi\|^2} - c\,e^{-\frac{\sigma^2}{2}\|k\|^2}
\]

We shall restrict the scales $2^j$ to dyadic scales, hence taking $j$ integer, and restrict the rotations to a discrete subgroup $\Gamma$ of $SO(d)$ of order $2^d - 1$ \citep{Meyer:92c}. In dimension $d=2$ such a group can be parameterized by one angle, in dimension $d=3$ it can be parameterized by 2 angles. We write $\Lambda = \Z \times \Gamma$ the group that defines filters $\psi_\lambda$ from $\psi$.

To guarantee that the wavelet transform $W$ (defined in eq. 9) is invertible and satisfies an energy conservation,
we impose that the $\psi_\lambda$ satisfy the following Littlewood-Paley inequality for $0<\delta<1$
\begin{equation}
\label{littlewood}
\forall k \ne 0 ~~,~~
1-\delta \leq \sum_{\lambda\in\Lambda}|\widehat{\psi}_\lambda [k]|^2 \leq 1+\delta.
\end{equation}
For fields defined on a cubic $d$-dimensional lattice of length $L$, the wavelets $\psi_\lambda$ are discretized accordingly.
The wavelet transform is computed up to the largest scale $2^J$ which is smaller than length $L$ so as to achieve a reasonable estimate of low-frequency moments, even on a single realization.
The lower frequencies of $x$ in the ball $|k|\leq 2^J$ are captured by a low-pass filter $\psi_0$ which is a Gaussian centered in $k=0$ in Fourier $\widehat\psi_0[k] = c_0\,\exp(-\sigma_0^2\|k\|^2/2)$ with $\sigma_0=\sigma\,2^{J-1}$. 
The Littlewood-Paley inequality now reads:
\begin{align}
\forall k \ne 0 ~~,~~
1-\delta 
\leq |\psi_0[k]|^2 + \sum_{|\lambda|^{-1}\leq2^J}|\widehat{\psi}_\lambda [k]|^2 \leq 1+\delta.
\end{align}
By applying the Parseval formula we derive that for all $x$
\begin{align}
(1-\delta)\|x\|^2 \leq \|Wx\|^2  \leq (1+\delta)\|x\|^2
\end{align}
which insures that $W$ preserves the norm of $x$, up to a relative error of $\delta$, and is therefore invertible, with stable inverse. For the wavelet used for syntheses of physical fields in this paper, we have $\delta\approx0.8$.

Covariance of wavelet coefficients $Wx[u,\lambda]$
can be written from the power spectrum $P(x)$ of $x$
\begin{align}
\E\{Wx[u,\lambda] Wx[u',\lambda']^*\} = \frac{1}{2\pi}\int P(x)[k]\widehat\psi_\lambda[k]\widehat\psi_{\lambda'}[k] e^{i\langle u-u',k\rangle} \mathrm{d}k.
\end{align}
It implies that this correlation is zero if the supports of $\widehat\psi_\lambda$ and $\widehat\psi_{\lambda'}$ do not overlap. For the specified wavelets, as soon as $\lambda\ne\lambda'$, these supports barely overlap and $\E\{Wx[u,\lambda]\,Wx[u',\lambda']^*\}\approx0$. 
Moreover, since $x$ is stationary, the covariance $\E\{Wx[u,\lambda]\,Wx[u',\lambda']^*\}$ only depends on $u-u'$ and have a fast decay when the power spectrum $P(x)$ is regular.
Thus, even if dependencies across separate scales may exist, they are not captured by correlation.

Taking the modulus of wavelet coefficients removes complex phase oscillations and thus recenter the frequency support of $Wx[u,\lambda]$. 
Indeed, the power spectrum $P_\lambda(x)$ of $x\star\psi_\lambda$ is mostly supported in a ball $\|k-\lambda\| \leq 2^{-j}\sigma^{-1}$ which does not overlap with the Fourier support of the power spectrum $P_{\lambda'}(x)$ of $x\star\psi_{\lambda'}$. 
Taking a modulus on $x\star\psi_{\lambda'}$ eliminates the phase which oscillates at the central frequency $\lambda'$. As a consequence, the power spectrum of $|x\star\psi_{\lambda'}|$ is centered at $k=0$ and its energy is mostly concentrated in $\|k\|\leq2^{-j}\sigma^{-1}$ which now may overlap with the support of $P_W(x)[\lambda]$ as can be seen in Fig.~2. 
The power spectra of $|Wx[u,\lambda]|$ and $|Wx[u,\lambda']|$, both centered at zero, also overlap. 

We now justify taking $u=u'$ in order 3 moments given by eq.~(13).
The cross spectrum $P_{\lambda,\lambda'}(x)$ between $Wx[u,\la]$ and $|Wx[u,\la']|$ is assumed regular for the fields considered in this paper. In that case one can approximate such cross-spectrum using wavelets, which gives the moments $\E\{WWx[u,\la,\ga]\,W|Wx|[u,\la',\ga]\}$. However, the left-hand-side $WWx[u,\la,\ga]$ is negligible when $\la\ne\ga$ because Fourier support of wavelets $\psi_\la$ and $\psi_\ga$ barely overlap. The resulting coefficients
\begin{align}
\E\{WWx[u,\la,\la]\,W|Wx|[u,\la',\la]\} = \frac{1}{2\pi}\int P_{\lambda,\lambda'}(x)[k] |\widehat\psi_\lambda|^2 dk
\end{align}
average $P_{\lambda,\lambda'}(x)[k]$ in a ball $\|k\|\leq2^{-j}\sigma^{-1}$ through $|\widehat\psi_\lambda|^2$. However, $P_{\lambda,\lambda'}(x)$ is already concentrated in this ball. 
We thus remove $|\widehat\psi_\lambda|^2$ which yields  $\E\{Wx[u,\la]\,|Wx|[u,\la']\}$.

The following proposition shows that Scattering Spectra reveal non-Gaussianity in a field $x$.
\begin{proposition}
Let $x$ be a stationary process.
\begin{enumerate}
\item If $x$ is Gaussian then for any separate scales $\lambda,\lambda'$, meaning that $\widehat\psi_\lambda\widehat\psi_{\lambda'}=0$
\begin{align}
\E\{\bar S_1(x)\}=\frac{\pi}{4},
\end{align}
\begin{align}
\E\{\bar S_3(x)[\lambda,\lambda']\}=0 ~~\text{and}~~\E\{\bar S_4(x)[\lambda,\lambda',\gamma]\}=0.
\end{align}
\item If $x$ is symmetric i.e. $p(-x) = p(x)$ then 
\begin{align}
\E\{\bar S_3(x)\}=0.
\end{align}
\item If $x$ is invariant by rotation of angle $\pi$ i.e. $p(x[-u])=p(x[u])$ then
\begin{align}
\mathrm{Im}\,\E\{\bar S_3(x)\} = 0 ~~\text{and}~~ \mathrm{Im}\,\E\{\bar S_4(x)\} = 0.
\end{align}
\end{enumerate}
\end{proposition}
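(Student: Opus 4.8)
The plan is to establish the three items separately, each one reducing to a short symmetry or independence argument on the auxiliary fields $Wx[\cdot,\lambda]$, $|Wx[\cdot,\lambda]|$ and $W|Wx|[\cdot,\lambda,\gamma]=|x\star\psi_\lambda|\star\psi_\gamma$. Two structural facts are used throughout. First, these are translation-equivariant functionals of $x$ (convolutions and a pointwise modulus), hence jointly stationary with $x$, so each $\E\{S_k(x)\}$ is independent of the site index and the averages $\Avu$ may be evaluated at $u=0$. Second, the renormalizations $\sigma[\lambda]>0$ are deterministic and invariant under both $x\mapsto-x$ and $x\mapsto x[-\cdot]$ (they only involve $\E\{|Wx|^2\}$), so it suffices to argue about the unnormalized $S_1,S_3,S_4$ and divide at the end.

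\textbf{Item 1 (Gaussian field).} If $x$ is Gaussian, $Wx[\cdot,\lambda]$ is a complex Gaussian field; since $\widehat\psi_\lambda$ is localized near $\lambda$ while its reflection $k\mapsto\widehat\psi_\lambda[-k]$ is localized near $-\lambda$, an analogous Fourier computation to the one in Appendix~\ref{app:wavelet} shows the pseudo-covariance $\E\{Wx[u,\lambda]^2\}$ vanishes; hence $Wx[u,\lambda]$ is circularly symmetric and $|Wx[u,\lambda]|$ is Rayleigh, which gives the Gaussian value $\E\{|Wx[u,\lambda]|\}^2/\E\{|Wx[u,\lambda]|^2\}=\pi/4$ of the normalized first-order statistic. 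For separate scales, $\widehat\psi_\lambda\widehat\psi_{\lambda'}=0$ forces both the covariance $\E\{Wx[u,\lambda]\overline{Wx[u',\lambda']}\}$ and the pseudo-covariance $\E\{Wx[u,\lambda]Wx[u',\lambda']\}$ to vanish, so the jointly Gaussian fields $Wx[\cdot,\lambda]$ and $Wx[\cdot,\lambda']$ are independent. Consequently $|Wx[\cdot,\lambda]|$ is independent of $Wx[\cdot,\lambda']$, and $W|Wx|[\cdot,\lambda,\gamma]$ is independent of $W|Wx|[\cdot,\lambda',\gamma]$; the expectations $\E\{S_3\}$ and $\E\{S_4\}$ then factorize, and each factorization carries a factor $\E\{Wx[u,\lambda']\}=\widehat\psi_{\lambda'}[0]\,\E\{x\}$, resp.\ $\E\{W|Wx|[u,\lambda',\gamma]\}=\widehat\psi_\gamma[0]\,\E\{|Wx[0,\lambda']|\}$, which vanishes because $\psi_{\lambda'}$, resp.\ $\psi_\gamma$, has zero mean. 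Therefore $\E\{\bar S_3(x)[\lambda,\lambda']\}=0$ and $\E\{\bar S_4(x)[\lambda,\lambda',\gamma]\}=0$.

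\textbf{Items 2 and 3 (the symmetries).} For item~2, linearity of $W$ gives $W(-x)=-Wx$, so $|W(-x)[u,\lambda]|=|Wx[u,\lambda]|$ while $\overline{W(-x)[u,\lambda']}=-\overline{Wx[u,\lambda']}$; thus the integrand $|Wx[u,\lambda]|\,\overline{Wx[u,\lambda']}$ of $S_3$ is odd under $x\mapsto-x$, and $p(-x)=p(x)$ yields $\E\{S_3(x)\}=-\E\{S_3(x)\}$, i.e.\ $\E\{\bar S_3(x)\}=0$ with no condition on the scales. For item~3, the key lemma is a reflection identity: writing $\check x[u]=x[-u]$, for a real field $x$ and a Hermitian-symmetric Morlet wavelet, $\psi_\lambda[-u]=\overline{\psi_\lambda[u]}$, the change of variable $u'\mapsto-u'$ in \eqref{conv_prod} gives $W\check x[u,\lambda]=\overline{Wx[-u,\lambda]}$; iterating the same computation, $|W\check x[u,\lambda]|=|Wx[-u,\lambda]|$ and $W|W\check x|[u,\lambda,\gamma]=\overline{W|Wx|[-u,\lambda,\gamma]}$. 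Substituting these into the definitions of $S_3$ and $S_4$, then using $p(\check x)=p(x)$ and stationarity to send $-u$ back to $u$, one obtains $\E\{S_3(x)[\lambda,\lambda']\}=\E\{\,|Wx[u,\lambda]|\,Wx[u,\lambda']\,\}=\overline{\E\{S_3(x)[\lambda,\lambda']\}}$ and likewise $\E\{S_4(x)\}=\overline{\E\{S_4(x)\}}$; both expectations are therefore real, so $\mathrm{Im}\,\E\{\bar S_3(x)\}=0$ and $\mathrm{Im}\,\E\{\bar S_4(x)\}=0$.

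\textbf{Main obstacle.} None of the three arguments is computationally heavy; the delicate points are the bookkeeping of complex conjugates and of the reflected labels $-\lambda$ in the reflection identity of item~3, and making precise the approximations hidden behind ``separate scales'' — namely the near-disjointness of the Fourier supports of $\psi_\lambda,\psi_{\lambda'}$ and of $\psi_\lambda,\psi_{-\lambda}$, on which both the circular symmetry of $Wx[u,\lambda]$ and the independence of distinct wavelet channels in item~1 rest. Once those support conditions are granted, each item is a one-line symmetry computation.
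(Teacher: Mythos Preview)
Your proof is correct and follows essentially the same route as the paper: Gaussian decorrelation $\Rightarrow$ independence for item~1, oddness of $S_3$ under sign flip for item~2, and the reflection identity $S_k(\check x)=S_k(x)^*$ for item~3. You are in fact more careful than the paper on two points---the circular symmetry of $Wx[u,\lambda]$ via the vanishing pseudo-covariance, and the explicit zero factor $\widehat\psi_\gamma[0]=0$ that kills the factorized expectations---and you correctly write $S_3(-x)=-S_3(x)$ where the paper's one-line argument has an apparent sign typo.
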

\begin{proof} If $x$ is Gaussian then $Wx[u,\lambda]$ is also Gaussian and the ratio between its first and second order moment is $\pi/4$. If $\widehat\psi_\lambda\widehat\psi_{\lambda'}=0$ then $Wx[u,\lambda]$ and $Wx[u,\lambda']$ are decorrelated, since $(Wx[u,\lambda],Wx[u,\lambda'])$ is Gaussian, this implies that $Wx[u,\lambda]$ and $Wx[u,\lambda']$ are independent. Thus, $Wx[u,\lambda]$ and $|Wx[u,\lambda']|$ are independent, so are  $W|Wx|[u,\lambda,\gamma]$ and $W|Wx|[u,\lambda',\gamma]$ which proves 1.
Point 2. is proved by observing that $S_3(-x)=S_3(x)$ and point 3. by observing that $S_3(x[-u]) = S_3(x)^*$ and $S_4(x[-u]) = S_4(x)^*$.
\end{proof}
For the physical fields studied in this paper, such coefficients are non-zero, thus revealing their non-Gaussianity~Fig.\ref{fig:scattering-spectra-visu}.

\begin{figure}
\centering
\includegraphics[width=\columnwidth]{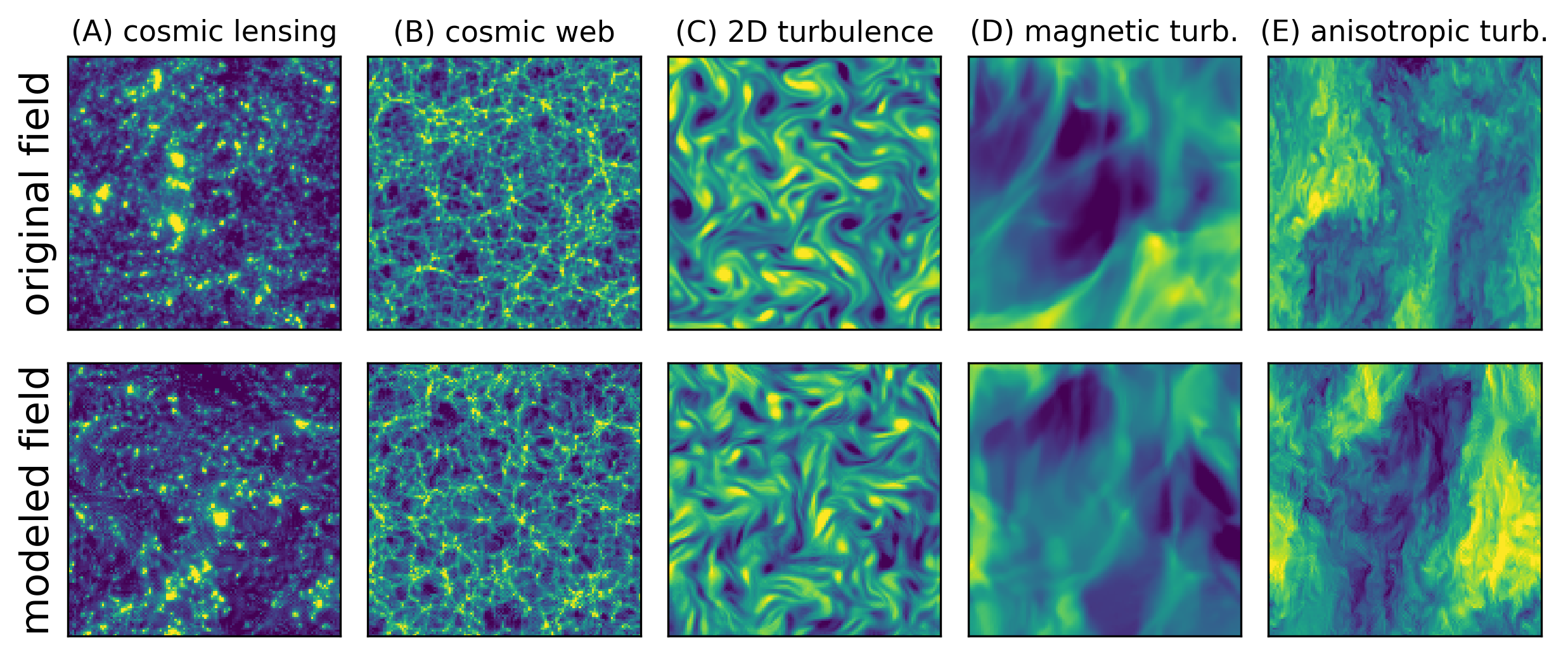}
\caption{Visual assessment of our model based on $\bar S$ with 11\,641 coefficients estimated on a single realization (top). Generated fields (bottom) show very good visual quality.}
\label{fig:synthesis-single-sample}
\end{figure}

\section{C. Equivariance and Invariance to rotations and scaling}
\label{app:rotation_scaling}

The scattering spectra are computed from wavelet transforms, which are equivariant to rotations and scalings. We show that scattering spectra inherit these equivariance properties. If $p(x)$ is isotropic or self-similar, then one can build isotropic or self-similar maximum entropy models by averaging renormalized scattering spectra over rotations or scales, which reduces both the variance and dimensionality of $\bar S$.

To avoid discretization and boundary issues for rotations and scaling, we consider fields $x[u]$ defined over continuous variables $u \in \R^d$, and establish the mathematical results in this framework. For this purpose, the sum in the wavelet transform defined in eq.~(9) is replaced by an integral over $\R^d$. Wavelets are dilated by $2^j$ for $j \in \Z$ and rotated by $r$ in a rotation group $G$ of cardinal $R$. In dimension $d = 2$, these rotations have an angle $2 \pi \ell / R$.
\begin{proposition}
For $r\in G$ with $x_r [u] = x[r^{-1} u]$ one has
\begin{equation}
\label{eq:equivar-rotation}
S(x_r)[\la,\la',\ga] = S(x)[r\la,r\la',r\ga].
\end{equation}
For $j\in\Z$ with $x_j [u] = x[2^{-j} u]$ one has
\begin{equation}
\label{eq:equivar-scaling}
S(x_j)[\lambda,\lambda',\gamma] = S(x)[2^j\lambda,2^j\lambda',2^j \gamma].
\end{equation}
\end{proposition}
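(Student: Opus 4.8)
The plan is to reduce the statement to a single equivariance identity for the wavelet transform $W$ and then propagate it mechanically through the cascade $x \mapsto Wx \mapsto |Wx| \mapsto W|Wx|$ that defines $S$. \textbf{Step 1 (equivariance of $W$).} I would first show that for $r\in G$, with $x_r[u]=x[r^{-1}u]$,
\[
Wx_r[u,\lambda] = Wx[\,r^{-1}u,\; r\lambda\,],
\]
and that for $j\in\Z$, with $x_j[u]=x[2^{-j}u]$,
\[
Wx_j[u,\lambda] = Wx[\,2^{-j}u,\; 2^{j}\lambda\,].
\]
Both follow from $Wx[\cdot,\lambda]=x\star\psi_\lambda$ by a change of variables in the integral, using that $\psi_\lambda$ is obtained from a single $\psi$ by the rotation and dilation in \eqref{wave}: for the rotation the change of variables is volume preserving, and for the dilation the Jacobian $2^{jd}$ is exactly cancelled by the normalisation $2^{-jd}$ carried by $\psi_\lambda$. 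The resulting relabelling of the frequency index is $\lambda\mapsto r\lambda$ (resp.\ $\lambda\mapsto 2^j\lambda$); the direction is fixed by the convention relating the label $\lambda$ to the pair $(2^j,r)$ together with the fact that $u\mapsto r^{-1}u$ permutes the family $\{\psi_\lambda\}$.

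\textbf{Step 2 (propagation through the non-linearity).} Taking the modulus is immediate, $|Wx_r[u,\lambda]|=|Wx[r^{-1}u,r\lambda]|$, so the field $u\mapsto|Wx_r[u,\lambda]|$ is simply $u\mapsto|Wx[u,r\lambda]|$ precomposed with $u\mapsto r^{-1}u$. Applying the second wavelet transform $W\cdot[\cdot,\gamma]$ to it and repeating the Step-1 change of variables yields
\[
W|Wx_r|[u,\lambda,\gamma] = W|Wx|[\,r^{-1}u,\; r\lambda,\; r\gamma\,],
\]
and likewise $W|Wx_j|[u,\lambda,\gamma]=W|Wx|[\,2^{-j}u,\; 2^{j}\lambda,\; 2^{j}\gamma\,]$.

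\textbf{Step 3 (the spatial average).} Since $x$ is stationary, so are $x_r$ and $x_j$ (rotating or rescaling the coordinates of a stationary field preserves stationarity), hence $Wx_r[\cdot,\lambda]$, $|Wx_r[\cdot,\lambda]|$, $W|Wx_r|[\cdot,\lambda,\gamma]$ and their dilated analogues are all stationary in $u$; therefore $\Avu$ of each of them (read as the ensemble expectation, equivalently the normalised spatial average) is unchanged when its argument is precomposed with $u\mapsto r^{-1}u$, and likewise with $u\mapsto 2^{-j}u$, the Jacobian of the dilation cancelling against the normalisation of the average. Substituting the identities of Steps 1--2 into the definitions \eqref{spec1}--\eqref{spec4} of $S_1,\dots,S_4$, each coefficient becomes an average over $u$ of a product of factors that are all evaluated at the shifted argument $r^{-1}u$ (resp.\ $2^{-j}u$) and all carry frequency indices rotated by $r$ (resp.\ scaled by $2^{j}$); absorbing the argument change into the average leaves exactly $S(x)[r\lambda,r\lambda',r\gamma]$, resp.\ $S(x)[2^{j}\lambda,2^{j}\lambda',2^{j}\gamma]$, which is \eqref{eq:equivar-rotation}, resp.\ \eqref{eq:equivar-scaling}.

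The computation is routine; the main obstacle is purely one of bookkeeping, namely keeping the direction of the group action consistent through every stage of the cascade (so that $u\mapsto r^{-1}u$ induces $\lambda\mapsto r\lambda$ and not $\lambda\mapsto r^{-1}\lambda$), which is pinned down once by the convention in \eqref{wave}, and being explicit enough about the meaning of $\Avu$ for fields on $\R^d$ that the change of variables under dilation is legitimate.
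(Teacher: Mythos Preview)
Your proposal is correct and follows exactly the approach taken in the paper, which reduces the proposition to the wavelet equivariance identities $Wx_r[u,\lambda]=Wx[r^{-1}u,r\lambda]$ and $Wx_j[u,\lambda]=Wx[2^{-j}u,2^{j}\lambda]$; the paper simply states these and leaves the propagation through $|\cdot|$, the second wavelet transform, and $\Avu$ implicit, whereas you spell them out. Your remark about the bookkeeping of the direction of the group action is well taken and is indeed the only place where care is needed.
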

\noindent\textbf{Proof.} 
It follows from the equivariance of wavelet coefficients, $Wx_r[u,\lambda] = Wx[r^{-1}u,r\lambda]$ and $Wx_j[u,\lambda] = Wx[2^{-j}u,2^j\lambda]$.

Isotropic fields $x$ have a distribution that is invariant to rotation $x_r \overset{d}{=}x$ for all $r\in G$. Self-similar fields $x$ have a distribution that is invariant to scaling, up to random multiplicative factors $x_j\overset{d}{=}A_jx$ for all $j\geq0$ \cite{mandelbrot1997multifractal}. For such fields, we show that the expected scattering spectra exhibit invariance to rotation or scaling of their indices, and thus have a lower-dimensional structure. For that purpose we used normalized scattering spectra coefficient $\bar{S}(x)$ defined eq.~(22), where the normalization is done by $\sigma^2 [\la] = \E \{|Wx[u,\la]|^2\}$. 

\begin{proposition}
If $x$ is isotropic then for any $r \in G$
\begin{equation}
\label{eq:invar-rotation}
\E \{ \bar S (x)[r\lambda,r\lambda', r\gamma] \} =
\E \{ \bar S (x)[\lambda,\lambda',\gamma] \} .
\end{equation}
If $x$ is self-similar at scales $2^j \leq 2^J$ then
\begin{equation}
\label{eq:invar-scaling-S12}
\E\{S_1(x)[\la]\}=c_1|\lambda|^{-\zeta_1}~~, ~~ \E\{S_2(x)[\la]\}=c_2|\lambda|^{-\zeta_2}
\end{equation}
\begin{align}
\label{eq:invar-scaling-S3}
\E\{\bar S_3(x)[2^j\la,2^j\la']\} & = \E\{\bar S_3(x)[\la,\la']\}
\\
\label{eq:invar-scaling-S4}
\E\{\bar S_4(x)[2^j\la,2^j\la',2^j\ga]\} & = \E\{\bar S_4(x)[\la,\la',\ga]\}
\end{align}
\end{proposition}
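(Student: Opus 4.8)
The plan is to reduce everything to the equivariance Proposition already proved for $S(x)$ (Eqs.~\eqref{eq:equivar-rotation} and \eqref{eq:equivar-scaling}) together with the distributional invariance hypotheses. First I would treat the isotropic case. Since $x_r \overset{d}{=} x$ for $r \in G$, we have $\E\{S(x_r)[\lambda,\lambda',\gamma]\} = \E\{S(x)[\lambda,\lambda',\gamma]\}$ for every index triple. Combining with the equivariance identity $S(x_r)[\lambda,\lambda',\gamma] = S(x)[r\lambda,r\lambda',r\gamma]$ and taking expectations gives $\E\{S(x)[r\lambda,r\lambda',r\gamma]\} = \E\{S(x)[\lambda,\lambda',\gamma]\}$. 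In particular, applied to $S_2$ alone (the one-index case, which is the $\lambda=\lambda'$, no-$\gamma$ specialization), this yields $\sigma^2[r\lambda] = \E\{S_2(x)[r\lambda]\} = \E\{S_2(x)[\lambda]\} = \sigma^2[\lambda]$, and similarly $\sigma[r\lambda]=\sigma[\lambda]$. Since each renormalized coefficient $\bar S_k$ is $S_k$ divided by products of $\sigma[\cdot]$ evaluated at its indices, and those $\sigma$ factors are rotation-invariant, the claimed identity \eqref{eq:invar-rotation} for $\E\{\bar S\}$ follows immediately by dividing through.

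For the self-similar case the structure is the same but with the extra multiplicative factor $A_j$. From $x_j \overset{d}{=} A_j x$ and homogeneity of the scattering coefficients in the field amplitude — $S_1(ax)=|a|\,S_1(x)$, $S_2(ax)=|a|^2 S_2(x)$, $S_3(ax)=|a|^2 S_3(x)$, $S_4(ax)=|a|^2 S_4(x)$, extended to the random scalar $A_j$ treated as independent of $x$ (this independence, or at least the factorization of expectations, is implicit in the self-similarity model of \citet{mandelbrot1997multifractal}) — I would write $\E\{S_2(x_j)[\lambda]\} = \E\{A_j^2\}\,\E\{S_2(x)[\lambda]\}$. On the other hand the scaling equivariance \eqref{eq:equivar-scaling} gives $\E\{S_2(x_j)[\lambda]\} = \E\{S_2(x)[2^j\lambda]\}$, hence $\sigma^2[2^j\lambda] = \E\{A_j^2\}\,\sigma^2[\lambda]$. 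Iterating in $j$ and using $|2^j\lambda| = 2^j|\lambda|$ forces $\sigma^2[\lambda]$ to be a power law in $|\lambda|$, which is exactly the content of \eqref{eq:invar-scaling-S12} once one identifies $c_2$ and the exponent $\zeta_2$ with $-\log \E\{A_1^2\}$ (and similarly $\zeta_1$ from the first-order moment, with $\sigma[\lambda]$ entering through the $\E\{|A_j|\}$ normalization for $S_1$); here one only needs the multiplicative relation $\E\{A_{j+j'}^p\}=\E\{A_j^p\}\E\{A_{j'}^p\}$ to pin down the power-law form.

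Finally, for \eqref{eq:invar-scaling-S3} and \eqref{eq:invar-scaling-S4}: by the same two ingredients, $\E\{S_3(x)[2^j\lambda,2^j\lambda']\} = \E\{S_3(x_j)[\lambda,\lambda']\} = \E\{A_j^2\}\,\E\{S_3(x)[\lambda,\lambda']\}$, and likewise for $S_4$ with the same factor $\E\{A_j^2\}$. Dividing the left side by $\sigma[2^j\lambda]\sigma[2^j\lambda'] = \E\{A_j^2\}\,\sigma[\lambda]\sigma[\lambda']$ (using the power-law relation just established, and noting $\bar S_3$, $\bar S_4$ carry exactly two $\sigma$ factors each) cancels the $\E\{A_j^2\}$ exactly, leaving $\E\{\bar S_3(x)[2^j\lambda,2^j\lambda']\} = \E\{\bar S_3(x)[\lambda,\lambda']\}$ and the analogous statement for $\bar S_4$. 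The main obstacle I anticipate is not algebraic but definitional: making precise the sense in which $A_j$ is a random multiplicative factor — whether it is a fixed constant, a random variable independent of $x$, or only jointly defined with $x$ — since the cancellation of $\E\{A_j^2\}$ between numerator and denominator relies on the expectations factoring. I would handle this by stating the self-similarity hypothesis in the strong form (stationary independent multiplicative increments $A_j$ independent of the normalized field), which is the standard multifractal cascade setting and suffices for the clean power-law and invariance statements; a remark would note that for deterministic scale invariance ($A_j$ constant) the argument is strictly simpler.
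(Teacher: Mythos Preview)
Your proposal is correct and follows essentially the same route as the paper: combine the distributional invariance $x_r\overset{d}{=}x$ (resp.\ $x_j\overset{d}{=}A_jx$) with the equivariance identities \eqref{eq:equivar-rotation}--\eqref{eq:equivar-scaling}, then observe that the normalizing factors $\sigma[\cdot]$ transform in exactly the way needed to cancel the extra multiplicative constants. Your explicit flagging of the independence of $A_j$ from $x$ (needed to factor $\E\{A_j^2 S_k(x)\}$) and of the multiplicativity $\E\{A_{j+j'}^p\}=\E\{A_j^p\}\E\{A_{j'}^p\}$ is a useful clarification that the paper leaves implicit.
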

\noindent\textbf{Proof.} Let us assume $x$ is isotropic $x_r \overset{d}{=}x$. It implies that $\E\{S(x_r)\}=\E\{S(x)\}$.
Thanks to the equivariance property of \eqref{eq:equivar-rotation} one gets the invariance property on $S$: $\E\{S(x)[r\la,r\la',r\ga]\}=\E\{S(x)[\la,\la',\ga]\}$. 
We obtain \eqref{eq:invar-rotation} by dividing this equation by $\E\{S_2(x)[r\la]\}=\E\{S_2(x)[\la]\}$.
\\
Let us assume $x$ is self-similar, $x_j \overset{d}{=}A_j x$. 
In that case one has $A_{j+j'}\overset{d}{=}A_jA_{j'}$, taking order $1$ and order $2$ moments, this implies $\E\{A_j\}=2^{-j\zeta_1}$ and $\E\{A_j^2\}=2^{-j\zeta_2}$ for certain power-law exponents $\zeta_1,\zeta_2$.
Now from self-similarity and equivariance property given by \eqref{eq:equivar-scaling} one has $\E\{S_1(x)[2^j\la]\}=\E\{A_j\}\E\{S_1(x)[\la]\}=2^{-j\zeta_1}\E\{S_1(x)[\la]\}$. Taking $2^{-j}=|\lambda|$ one obtains $\E\{S_1(x)[\la]\}=c_1|\la|^{-\zeta_1}$ with $c_1=\E\{S_1(x)[|\la|^{-1}\la]\}$ independent on $|\la|$. With the same reasoning on $S_2$ we obtain \eqref{eq:invar-scaling-S12}.
From self-similarity and equivariance property given \eqref{eq:equivar-scaling}, we get similarly:
$\E\{S_3(x)[2^j\la,2^j\la']\}=2^{-j\zeta_2}\E\{S_3(x)[\la,\la']\}$. Dividing by $\E\{S_2(x)[\la]\}=c_2|\la|^{-\zeta_2}$ yields \eqref{eq:invar-scaling-S3}. We obtain \eqref{eq:invar-scaling-S4} similarly, which proves the proposition.

The wavelet coefficient renormalization is necessary to ensure that the scattering spectra are invariant to scaling. As explained in \cite{WCRG}, it is directly related to Wilson renormalization, which yields macrocanonical parameters (physical couplings) that remain constant across scales (fixed point) at phase transitions, where the field becomes self-similar. 

If $x$ is isotropic, then \eqref{eq:invar-rotation} implies
that 
\begin{equation}
\label{rotatAve}
\Avr \bar S (x)[r\lambda,r\lambda', r\gamma]
\end{equation} 
is an unbiased estimator of $\E\{\bar S (x)[\lambda,\lambda', \gamma]\}$ with lower variance than $\bar S (x)[\lambda,\lambda', \gamma]$. 
Choosing $\Phi(x) = \Avr \bar S (x)[r\lambda,r\lambda', r\gamma]$ also reduces the dimension of our model by a factor $R$. Since this representation is invariant to rotations of $x$ in $G$, the macrocanonical and microcanonical models defined from it are also invariant to these rotations.

Similarly, if $x$ is self similar on a range of scales $2^j\leq2^J$, then \eqref{eq:invar-scaling-S3} and \eqref{eq:invar-scaling-S4} implies that 
\begin{equation}
\label{scaleAve}
\Avj \bar S_3 (x)[2^j\la,2^j\la']
~~,~~ 
\Avj \bar S_4 (x)[2^j\la,2^j\la', 2^j\ga]
\end{equation}
where the average is taken on all scales $j$ such that $ (2^j|\la|)^{-1}\leq 2^J,(2^j|\la'|)^{-1}\leq 2^J,(2^j|\ga|)^{-1}\leq 2^J$,
are unbiased estimators of $\E\{\bar S_3 (x)[\la,\la']\}$ and $\E\{\bar S_4 (x)[\la,\la', \gamma]\}$ with lower variance than $\bar S_3(x)$ and $\bar S_4(x)$. Choosing $\Phi(x) = (\bar S_1(x),\bar S_2(x),\Avj\bar S_3(x),\Avj\bar S_4(x))$ reduces the dimension of our model by at most a factor $\log L$. The resulting maximum entropy model is not necessarily self-similar due to the presence of scale-dependent moments $\E\{\bar S_1(x)\}$ and $\E\{\bar S_2(x)\}$. However, if $\bar S_1(x)[\la]$ and $\bar S_2(x)[\la]$ have a power-law decay along $\la$ our model becomes self-similar. 

\begin{figure}
\centering
\includegraphics[width=0.97\linewidth]{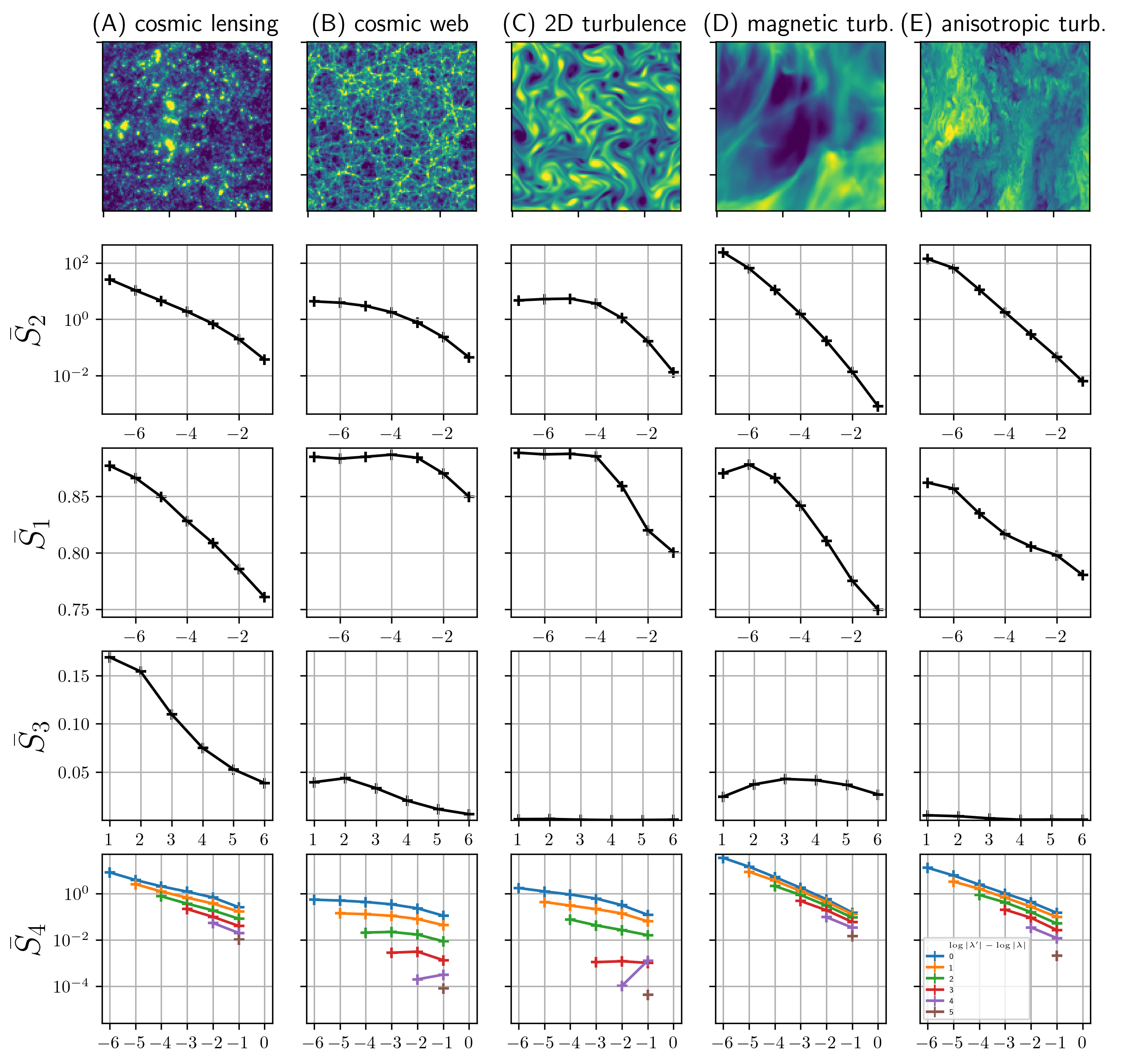}
\caption{
Visualization of Scattering Spectra $\bar S$ for different physical fields. Power-spectrum $S_2$ and sparsity factors $\bar S_1$ are averaged along all angles (amount to taking the 0-th angle Fourier harmonic).
We only show the 0-th angle Fourier harmonic and 0-th scale Fourier harmonic for order 3 and order 4-moment estimators $\bar S_3$ and $\bar S_4$. Thus, the quantities that are shown are invariant to the rotation of the field, and the last two rows ($\bar S_3, \bar S_4$ are furthermore invariant to scaling). Non-zero coefficients $\bar S_3$ show that the cosmic lensing and cosmic web fields are not invariant to sign flip. 
This is due to the presence of high positive peaks on the former and filaments on the latter. The large amplitude of envelope coefficients $\bar S_4$ on the last 2 fields indicate long-range spatial dependencies as evidenced by the presence of structures at the level of the map.
}
\label{fig:scattering-spectra-visu}
\end{figure}
%

\section{D. Dimension reduction with Fourier thresholding}
\label{app:fourier_thresholding}

We give in this appendix the details of the dimensional reduction of $\bar S$ into $P \bar S$, which is done by Fourier projectors of $\bar S(x)$ along rotations and scales, estimated by thresholding. This dimensional reduction based on regular variations of the dependence of $\bar S$ on different scales, allows for a representation of lower variance, bringing the microcanonical and macrocanonical models closer together.

We concentrate on the two-dimensional case $d = 2$ corresponding to numerical applications. The rotation group is then Abelian and defined by a single angle parameter, which simplifies the Fourier transform calculation. However, the same approach applies to 
non-commutative groups $G$ of rotations in $\R^d$ for $d > 2$, with their Fourier transform. Each wavelet frequency is defined in eq.~(8) by $\la = 2^{-j} r_\ell \xi$, where $r_\ell$ is a rotation of angle $2 \pi \ell / R$. To guaranty that the scattering spectra frequencies
satisfy $|\la| \leq |\la'| < |\ga|$, we
write 
\begin{align}
    \la = 2^{-j_1} r_{\ell_1} \xi~,~\la' = 2^{-j_1 - a} r_{\ell_2} \xi~,~\ga = 2^{-j_1-b} r_{\ell_3} \xi
\end{align}
with $0 \leq a < b \leq J - j_1$ and $J <  \log L$. 
It leads to a scale and angle reparametrization of the scattering spectra: 
\begin{align}
    \bar S(x) [\la,\la',\ga] = \bar S(x) [j_1,a,b,\ell_1,\ell_2,\ell_3].
\end{align}

If $\bar S(x)$ has regular variations as a function of rotations then its three-dimensional Fourier transform along the $(\ell_1,\ell_2,\ell_3)$ has coefficients of negligible amplitude at high frequencies, which can thus be eliminated. One can also take advantage of regularities along scales. Since $1 \leq j_1 \leq J$ varies on an interval without periodicity, the Fourier transform is replaced by a cosine transform along $j_1$ for $a$ and $b$ fixed. We could also perform a cosine transform along the scale shift $a$ and $b$, but this is not done in numerical applications because their range of variations is small and $j$-dependent. The Fourier transforms along $j_1$ is however sufficient to identify scale-invariance, since one then expects $\bar S$ to only depend on $a$ and $b$, see appendix~\ref{app:rotation_scaling}. We write $F\, \bar S(x)$ the Fourier transform of $\bar S(x)$ along $(\ell_1,\ell_2,\ell_3)$ and its cosine transform along $j_1$. 

Since $F$ is unitary, it preserves the estimator variance:
\begin{equation}
    \label{varerror}
    \sigma_{\bar S}^2 =
 \E\{ \|\Avi F \bar S(x_i) - \E\{F \bar S(x)\}\|^2\} .
\end{equation}
Ideally, the estimation error of $\E\{F \bar S(x)\}$ is reduced by eliminating its coefficients whose squared amplitude is smaller than the variance of the empirical estimation error. It amounts to suppressing all coefficients having a variance that is larger than the bias resulting from their elimination. However, we can not implement this optimal "oracle" decision because we do not know $\E\{F \bar S(x)\}$. In this paper, we instead apply an approximate thresholding algorithm, which eliminates small amplitude coefficients of $\bar S(x)$ below a threshold proportional to their standard deviation, as discussed in the main text. This thresholding algorithm is adaptive and the selected coefficients vary from one process to another. For each process studied, an ensemble of between 20 to 100 samples $\{ x_i\}$ were used to empirically estimate  the average and variance of $F\bar{S}$, called $\mu(F\bar{S})$ and $\sigma(F\bar{S})$. The coefficients which have been kept are those that individually verify $\mu(F\bar{S}) > 2\sigma(F\bar{S})$.

A projected scattering spectra
\begin{align}
    \Phi(x) = P\, \bar S(x)
\end{align}
is computed with a linear Fourier projection $P$ which eliminates all coefficients of $F \bar S(x)$ corresponding to coefficients of ${\rm Ave}_i F \bar S(x_i)$ below their threshold.
The efficiency of this projected scattering is the variance reduction ratio $\sigma_{P \bar S}^2 / \sigma_{\bar S}^2$ with
\begin{equation}
    \label{signonsfa}
 \sigma_{P \bar S}^{2} =   {\E\{ \|\Avi P  \bar S(x_i) - \E\{P \bar S(x)\}\|^2\}} . 
 \end{equation}
If $p(x)$ is isotropic or self-similar then we expect that $P$ is a low-frequency projector along global rotations (which act similarly on all $l_i$ coordinates) or scalings (which act on $j$), which corresponds to the averages described in \eqref{rotatAve} and \eqref{scaleAve}. The Fourier projection $P$ is however much more general and can adapt to unknown regularities of $p(x)$ along rotations and scales. 

\section{E. Number of coefficients for shell binned polyspectra}
\label{app:poly}

For a 2D field, there are originally $O(L^4)$ bispectrum coefficients in total, as there are two independent frequencies in the bispectrum and each has two dimensions. If we take $N_\text{bin}$ linear frequency bins along each side of $L$ lattice points, the coefficients to be estimated is reduced to $O(N_\text{bin}^4)$. A rotation and parity average will further reduce and better estimate the bispectrum coefficients, which eliminates one dimension and leads to
$\sim \frac{1}{A^3_3} \cdot \frac{1}{2}N_\text{bin} \cdot \frac{3}{4}N_\text{bin}^2 \cdot \frac{1}{2} = \frac{1}{32}N_\text{bin}^3$
binned coefficients, where $1/A^3_3=\frac{1}{6}$ is the repeated counting of the three-frequency combinations in bispectrum, $\frac{1}{2}N_\text{bin}$ is the number of choice of $k_1$, given rotation invariance, $\frac{3}{4}$ is the number of choice of $k_2$ given the requirement that each $k$ is within the $L\times L$ lattice in Fourier space and $k_1+k_2+k_3 = 0$, and the factor $\frac{1}{2}$ comes from parity average.

For 2D fields, the shell-binned bispectrum is essentially a fast way to compute the rotation and parity average of the bispectrum. It does not mix very different configurations,  because a given set of $|k_1|,|k_2|,|k_3|$ combined with the condition  $k_1+k_2+k_3=0$ uniquely set the configuration up to free rotations. The number of coefficients is of the order $\sim \frac{1}{8}N_\text{bin}^3$ (the scaling power is 3 rather than $2d=4$ because the orientation average eliminates one degree of freedom). For our choice of $N_\text{bin}=10$, there are 151 shell-binned bispectrum coefficients. Similarly, the shell-binned trispectrum $\bar T$ has  651 $\sim \frac{1}{16}N_\text{bin}^4$ coefficients.
Note that the shell-binning for trispectrum is more aggressive, because in 2D the same set of $|k_1|,|k_2|,|k_3|,|k_4|$ may come from different combinations $k_1,k_2,k_3,k_4$ even if the condition $k_1+k_2+k_3+k_4=0$ is applied.

The ordering of $\bar B$ and $\bar T$ shown in Fig.~4 is determined in a nested way. The frequency annuli are labeled by $i$ from small to large $|k|$. To remove redundant coefficients, we require $i_1 \leq i_2 \leq i_3 (\leq i_4)$ and order them first by $i_1$ in increasing order; when two binning configurations have the same $i_1$, they are then ordered by $i_2$ and so on.

\section{F. Improvement from power spectrum to scattering spectra}
\label{app:PB}

In Figure~3 we have shown the modeled fields with up to 4th order moments and scattering spectra. Here we demonstrate step by step the improvement from the traditionally used power spectrum to the powerful $\overline{S}$ in Figure~\ref{fig:PB}. The power spectrum (2nd row) is clearly not able to reproduce any non-Gaussian structures, it is able to capture the strong anisotropy in field E as we measure it with 4 different orientations. The next row shows results with isotropic bispectrum defined as eq.~(26). We modified the $|k|$ bins to be logarithmically spaced because a linear binning causes a convergence issue of the algorithm. The sparse peaks in field A start to emerge, but characteristic structures in other fields such as filaments, swirls, and stream lines are still missing.  The fourth row demonstrates the improvement from replacing the bispectrum by the selected 3rd order wavelet moments $M_3$, with which richer structures emerges in field B and C. There is a convergence issue for field D which demonstrates the numerical instability of higher-order moments. The last two rows then show the improvements from replacing higher-order moments to scattering spectra, up to 3rd and 4th order.

\begin{figure}
\centering
\includegraphics[width=\linewidth]{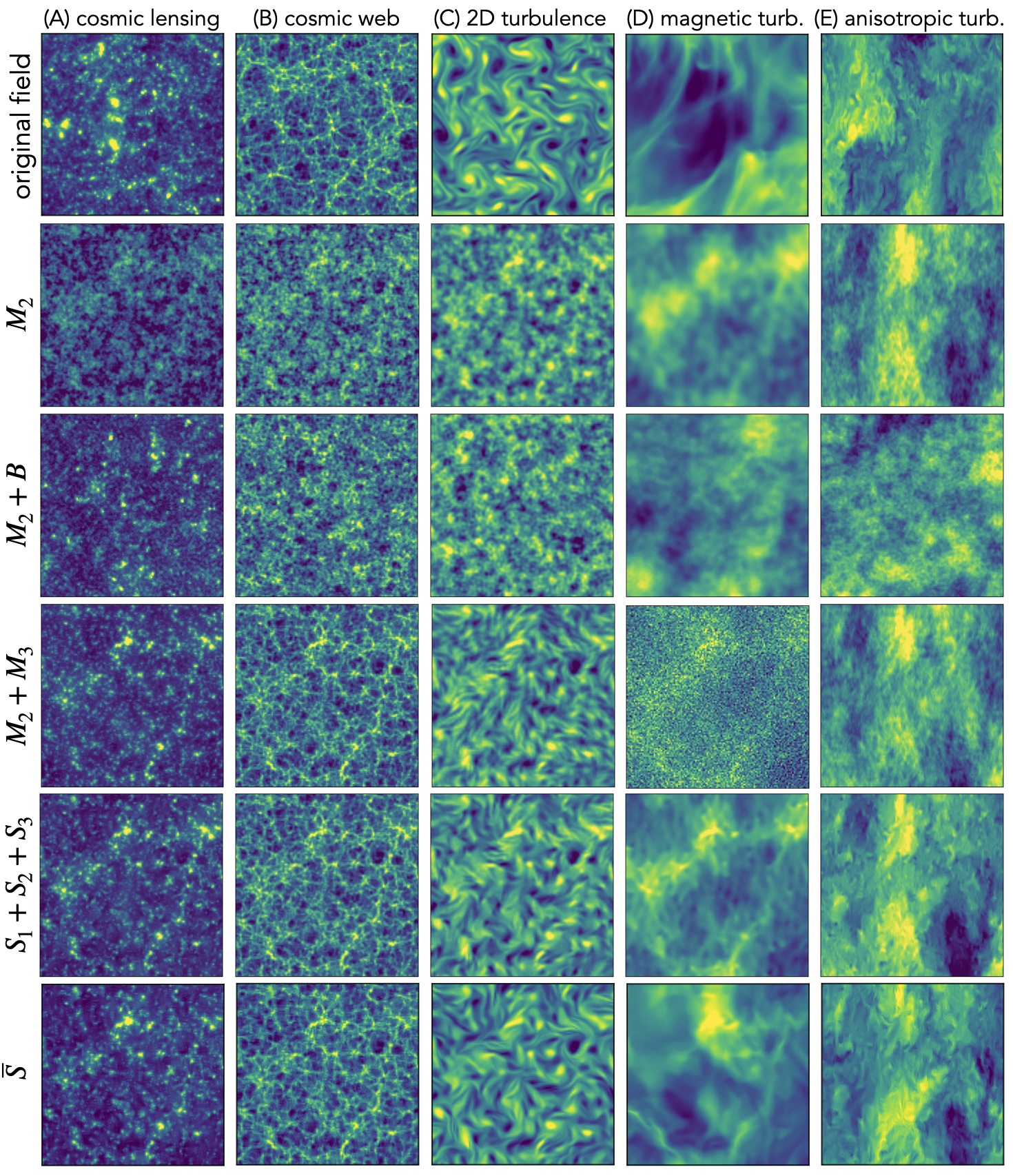}
\caption{The improvement of modeled fields from using only the power spectrum and bispectrum to the scattering spectra.
}
\label{fig:PB}
\end{figure}

\section{G. Logarithm in field B}
\label{app:log}

In our main text we have built the scattering spectra model on the logarithm of cosmic density field (field B) to characterize the cosmic web structure. An interesting question arises as whether the model performs equally well without the logarithm, where the fluctuations becomes dominated by a few high peaks and the cosmic web structure is overshone by the peaks in terms of amplitude. We have performed experiments to explore this regime. We find the scattering spectra model can reproduce many peaks and the bright filament structures, but much worse for the highest peaks and faint filaments than on the field with logarithm transform. In addition, although it can reproduce a reasonably well fit skewed PDF with a fast drop on the left and long tail on the right, it cannot completely forbid negative values; the very few highest peaks are also not bright enough. 
Given that the morphology of a web requires good characterization of the voids which have density close to zero, it is no wonder that our model does not well reproduce the faint filaments. For comparison, although the cosmic lensing field (field A) is a projected cosmic density field and also has a log-normal PDF, it is much less skewed due to the physical projection along the line of sight. As a result, although we do also observe some deviation of higher-order moments between the model and original fields caused by the log-normal PDF, it is not as severe a problem as the field B before logarithm. Also, the much flatter decrease on the low end of PDF in field A reflects the fact that cosmic web structures are averaged out due to projection.

We find that the failure of modeling field B without logarithm can be fixed by either imposing the logarithm as we have done in the main text, which guarantees the cutoff at zero and makes whole PDF more Gaussian \citep[e.g.,][]{Neyrinck_2009}, or by adding coefficients to our model which explicitly constrain the PDF profile in a differentiable manner, using e.g., smoothed histogram \cite{WCRG} or the mean values of different quantiles. However, the convergence time with the latter solution is increased by a factor of ten and requires balance between the loss function from the PDF and scattering constraints. A thorough study on incorporating the PDF constraints to the scattering spectra model is left for future work.

\bibliographystyle{mnras}
\bibliography{SC}

\end{document}